\newtheorem{thm}{\protect\theoremname}
\theoremstyle{plain}
\newtheorem{lem}[thm]{\protect\lemmaname}
\theoremstyle{plain}
\newtheorem{rem}[thm]{\protect\remarkname}
\theoremstyle{plain}
\newtheorem*{lem*}{\protect\lemmaname}
\theoremstyle{plain}
\theoremstyle{plain}
\newtheorem{cor}[thm]{\protect\corollaryname}
  \providecommand{\corollaryname}{Corollary}
  \providecommand{\lemmaname}{Lemma}
  \providecommand{\propositionname}{Proposition}
  \providecommand{\remarkname}{Remark}
\providecommand{\theoremname}{Theorem}
\DeclareMathOperator{\argmin}{arg min}
\DeclareMathOperator{\dist}{dist}
\newcommand{\Or}{\mathcal{O}}
\newcommand{\RR}{\mathbb{R}}
\newcommand{\E}{\mathbb{E}}
\newcommand{\wt}{\widetilde}
\newcommand{\dd}{\mathrm{d}}
\newcommand{\half}{\frac{1}{2}}
\newcommand{\ntp}[1]{\left|#1\right|_{2\pi}}
\newcommand{\one}[1]{\mathbf{1}_{\{#1\}}}
\newcommand{\He}{{H_{\mathrm{effective}}^{(1)}}}
\newcommand{\phit}{\phi}
\newcommand{\bd}{b^{\dagger}}
\newcommand{\norm}[1]{\left\|#1\right\|}
\newcommand{\abs}[1]{\left|#1\right|}
\renewcommand{\Re}{\operatorname{Re}}
\renewcommand{\Im}{\operatorname{Im}}
\newcommand\CoAuthorMark{\footnotemark[\arabic{footnote}]}
\title{Heisenberg-limited Hamiltonian learning for interacting bosons}
\author[1]{Haoya Li \footnote{These authors contributed equally to this work.}}
\author[2]{Yu Tong \protect\CoAuthorMark}
\author[3]{Hongkang Ni}
\author[2]{Tuvia Gefen}
\author[1,3]{Lexing Ying}
\affil[1]{Department of Mathematics, Stanford University, Stanford, CA 94305}
\affil[2]{Institute for Quantum Information and Matter, California Institute of Technology, Pasadena, CA 91125}
\affil[3]{Institute for Computational and Mathematical Engineering, Stanford University, Stanford, CA 94305}
\begin{document}

\maketitle

\begin{abstract}
    We develop a protocol for learning a class of interacting bosonic Hamiltonians from dynamics with Heisenberg-limited scaling. For Hamiltonians with an underlying bounded-degree graph structure, we can learn all parameters with root mean squared error $\epsilon$ using $\mathcal{O}(1/\epsilon)$ total evolution time, which is independent of the system size, in a way that is robust against state-preparation and measurement error. In the protocol, we only use bosonic coherent states, beam splitters, phase shifters, and homodyne measurements, which are easy to implement on many experimental platforms. A key technique we develop is to apply random unitaries to enforce symmetry in the effective Hamiltonian, which may be of independent interest. 
\end{abstract}

\section{Introduction}


Many tasks in quantum metrology and quantum sensing can be reduced to the task of learning the Hamiltonian $H$ of a quantum system, whose evolution is described by the operator $e^{-iHt}$ \cite{de2005quantum,valencia2004distant,leibfried2004toward,bollinger1996optimal,lee2002quantum,mckenzie2002experimental,holland1993interferometric,wineland1992spin,caves1981quantum}. We call this task \emph{Hamiltonian learning}, a name that is commonly used in the literature \cite{wiebe2014a,wiebe2014b,li2020hamiltonian,che2021learning,HaahKothariTang2021optimal,yu2022,hangleiter2021,FrancaMarkovichEtAl2022efficient,ZubidaYitzhakiEtAl2021optimal,BaireyAradEtAl2019learning,GranadeFerrieWiebeCory2012robust,gu2022practical,wilde2022learnH,KrastanovZhouEtAl2019stochastic}. Besides quantum metrology and quantum sensing, Hamiltonian learning is also useful for quantum device engineering \cite{boulant2003,innocenti2020,ben2020,shulman2014,sheldon2016,sundaresan2020}, and quantum many-body physics \cite{wiebe2014a,wiebe2014b,verdon2019,burgarth2017, wang2017, kwon2020, wang2020, huang2020predicting}.

Previous works on Hamiltonian learning for many-body quantum systems are generally subject to the standard quantum limit (SQL), where to estimate the parameters in the Hamiltonian to precision $\epsilon$, $\Or(\epsilon^{-2})$ samples are required \cite{li2020hamiltonian,che2021learning,HaahKothariTang2021optimal,yu2022,hangleiter2021,FrancaMarkovichEtAl2022efficient,ZubidaYitzhakiEtAl2021optimal,BaireyAradEtAl2019learning,GranadeFerrieWiebeCory2012robust,gu2022practical,wilde2022learnH,KrastanovZhouEtAl2019stochastic,caro2022learning}. On the other hand, for simple systems such as those consisting of a single spin, the Heisenberg limit can be achieved, where to obtain $\epsilon$ precision, only $\Or(\epsilon^{-1})$ total amount of resources is needed. Achieving the Heisenberg limit requires using quantum-enhanced protocols that either use $\Or(\epsilon^{-1})$ entangled probes \cite{lee2002quantum,bollinger1996optimal,leibfried2004toward} or coherent evolution for $\Or(\epsilon^{-1})$ time \cite{de2005quantum,higgins2007entanglement,kimmel2015robust}. 
The resources consumed are the number of probes and the length of time evolution, respectively.

The natural question is, can we achieve the Heisenberg limit for many-body quantum systems? When applying the existing quantum-enhanced protocols to the many-body setting, one quickly encounters difficulties. When many entangled probes are used, one needs many copies of the quantum system with the same parameters that can evolve simultaneously without interacting with each other. It is often unclear how one can create these copies, except for certain scenarios, such as when many probes undergo evolution under the same field strength. For long coherent time-evolution, the many-body nature of the quantum systems becomes problematic as subsystems undergo open-system dynamics, and phenomena such as thermalization prevent local observables from having enough sensitivity to achieve the Heisenberg limit. One can consider performing entangled measurements across all parts of the many-body system. Still, the difficulty in simulating the system makes finding a good measurement strategy extremely difficult. 

Recently, a method was proposed in \cite{HuangTongFangYuan2023learning} to perform Hamiltonian learning for many-body spin systems with Heisenberg-limited scaling. The main technique is to apply quantum control in the form of random Pauli operators during time evolution so that the system evolves with an effective Hamiltonian that is easy to learn and, at the same time, preserves the parameters that one wants to learn. Another recent work proved that some form of quantum control is necessary for achieving the Heisenberg limit in this task \cite{dutkiewicz2023advantage}.


The above works are all focused on multi-qubit systems, and Heisenberg-limited Hamiltonian learning for bosonic systems is relatively less studied. 
Bosonic systems, such as superconducting circuits \cite{krantz2019quantum,blais2021circuit}, integrated photonic circuits \cite{wang2020integrated} and optomechanical platforms \cite{aspelmeyer2014cavity, metcalfe2014applications} are widely used for quantum sensing, communication, and computing \cite{blais2004cavity,clerk2010introduction,chamberland2022building,adhikari2014gravitational}. These quantum applications require efficient calibration \cite{hangleiter2021}, and it is thus highly desirable to develop optimal algorithms for characterizing bosonic Hamiltonians. For example, quantum computing and sensing with transmons require learning the energy levels and interactions between the transmons and microwave resonators. 

For bosonic systems, there is a different set of ``easy'' quantum states, unitaries, and measurements than for spins. This work assumes that one can prepare coherent states, apply phase shifters and beam splitters, and perform the homodyne measurement. We note that although we may use terms from quantum optics, such as ``phase shifters'', we do not constrain our discussion to the optical setting. Additionally, in our protocol, we do not require any squeezing, which can be experimentally difficult to implement \cite{qin2022beating,dassonneville2021dissipative,kronwald2013arbitrarily,wiseman1994squeezing}. Using these resources, we present a protocol to learn a class of interacting bosonic Hamiltonians with Heisenberg-limited scaling. These Hamiltonians involve terms that are quadratic or quartic in the creation and annihilation operators, and are particle-number preserving. The specific form of the Hamiltonians is given in \eqref{eq:hamiltonian_general}. Our protocol can also tolerate a constant amount of noise in the state preparation and measurement (SPAM) procedures and has a small classical post-processing cost.

In our method, we apply random unitaries during time evolution to reshape the Hamiltonian into an effective Hamiltonian that is easier to learn. This follows the same high-level idea as \cite{HuangTongFangYuan2023learning} but is specifically tailored to the bosonic setting. Moreover, we can interpret the procedure as enforcing a target symmetry in the effective Hamiltonian, thus putting constraints on the dynamics. We believe this technique may be useful for other problems in quantum simulation as well \cite{nguyen2022digital,halimeh2021gauge}. In analyzing the deviation from the effective dynamics, the unboundedness of the bosonic Hamiltonian terms poses a challenge, as the analysis in \cite{HuangTongFangYuan2023learning} requires Hamiltonian terms to be bounded. We use more involved techniques to overcome this difficulty in Section~\ref{sec:deviation_from_effective_dynamics_appendix}.


\section{Results}


In this work, we focus on quantum systems on $N$ bosonic modes forming a $d$-dimensional lattice, with the Hamiltonian of the form
\begin{equation}
    \label{eq:hamiltonian_general}
    H = \sum_{\braket{i,j}} h_{ij}b_i^{\dagger}b_j + \sum_i \omega_i b_i^{\dagger}b_i + \frac{\xi_i}{2}\sum_i n_i(n_i-1),
\end{equation}
where $b_i$ ($b_i^{\dagger}$) are bosonic annihilation (creation) operators, and $n_i=b_i^{\dagger}b_i$ are the number opreators. $\braket{i,j}$ means that the summation is over sites $i,j$ that are adjacent to each other. $h_{ij}=h_{ji}^*$, and each $\xi_i$ and $\omega_i$ is a real number. We also assume that $|h_{ij}|,|\omega_i|,|\xi_i|\leq 1$. This class of Hamiltonians is relevant for superconducting quantum processors \cite{hangleiter2021}, arrays of coupled cavities \cite{hartmann2008quantum}, and phonon dynamics in ion crystals \cite{porras2004bose,shen2014scalable}. We will present a protocol that generates estimates $\hat{h}_{ij}$, $\hat{\omega}_i$, and $\hat{\xi}_i$ such that
\begin{equation}
    \mathbb{E}[|\hat{h}_{ij}-h_{ij}|^2],\ \mathbb{E}[|\hat{\omega}_i-\omega_i|^2],\ \mathbb{E}[|\hat{\xi}_i-\xi_i|^2]\leq \epsilon^2,
\end{equation}
for all $i$ and $j$.
The protocol has the following properties:
\begin{enumerate}
    \item The \emph{total evolution time} is $\Or(\epsilon^{-1})$;
    \item The number of experiments is $\Or(\operatorname{polylog}(\epsilon^{-1}))$;
    \item A constant amount of SPAM error can be tolerated.
\end{enumerate}
More precisely, our protocol consists of $N_{\exp}=\Or(\operatorname{polylog}(\epsilon^{-1}))$ experiments, which we number by $1,2,\cdots,N_{\exp}$.
In the $j$th experiment, we will initialize each bosonic mode in the system in a coherent state, let the system evolve for time $t_j>0$, and perform homodyne measurement for the bosonic modes. During time evolution, we will apply random beam splitters (on two modes) or phase shifters (on one mode). The total evolution time is defined to be $\sum_{j=1}^{N_{\exp}}t_j$, which is the amount of time required to run all the experiments. We assume that after we prepare the initial state and before we perform the measurement, the system goes through error channels $\mathcal{E}_1$ and $\mathcal{E}_2$, which model the SPAM error. If $\|\mathcal{E}_1-\mathcal{I}\|_{\diamond}+\|\mathcal{E}_2-\mathcal{I}\|_{\diamond}$ is upper-bounded by a small constant, then our protocol will still be able to reach arbitrary precision $\epsilon$. Here $\|\cdot\|_{\diamond}$ is the diamond norm \cite{watrous2018theory}, and $\mathcal{I}$ is the identity channel. The precision is measured by the mean squared error (MSE). We are using the big-$\Or$ notation to hide the constants for simplicity, and we note that these constants never depend on the system size. Our protocol generates $\Or(NN_{\exp})=\Or(N\operatorname{polylog}(\epsilon^{-1}))$ classical data and it takes a similar amount of time to process these data to compute the estimates.

Below we will describe the protocol in detail. We will start with a protocol to learn a single anharmonic oscillator, which forms the basic building block for more complex situations.

\subsection{Learning an anharmonic oscillator}
\label{sec:learning_an_anharmonic_oscillator}

We first consider the simple case in which
\begin{equation}
    H_{\mathrm{AHO}} = \omega n + \frac{\xi}{2}n(n-1),
\end{equation}
where $n=b^{\dagger}b$. We want to estimate the coefficients $\omega$ and $\xi$ with root mean squared error (RMSE) at most $\epsilon$.


This is a quantum sensing problem with two parameters to be estimated. In quantum sensing, one usually calculates the quantum Cram{\'e}r-Rao bound (QCRB) that provides a lower bound on the MSE of unbiased estimators. Because the two parameters correspond to Hamiltonian terms that commute with each other, the QCRB scales inverse quadratically with time, allowing us to achieve the Heisenberg-limited scaling. This bound, however, is valid only for local estimation where the prior distribution of the estimators is already concentrated around the exact value. Here we provide an estimation protocol that achieves this scaling without any prior knowledge of the parameters.

Our protocol builds upon a robust frequency estimation algorithm similar to the robust phase estimation algorithm proposed in \cite{kimmel2015robust} as well as the alternative version in \cite{ni2023low}. In the robust phase estimation algorithm, we assume that through performing certain experiments that we will specify when introducing our protocol, we have access to a random variable $Z_{\delta}(t)$ from measurement results, such that $|Z_{\delta}(t)-e^{-i\omega t}|\leq 1$ with probability at least $1-\delta$, and generating such a random variable requires evolution time $\Or(t\log(\delta^{-1}))$. With multiple samples of this variable for different values of $t$ and $\delta$, we can generate an estimate of $\omega$ with RMSE at most $\epsilon$ using $\Or(\epsilon^{-1})$ total evolution time. The algorithm proceeds by iteratively obtaining estimates with increasing accuracy through longer time evolution until the target precision is achieved. A detailed description of the algorithm and proof of its correctness can be found in Section~\ref{sec:robust_frequency_estimation_appendix}. 


We initialize the system in a coherent state $\ket{\alpha}=e^{-|\alpha|^2/2}\sum_{k}(\alpha^k/\sqrt{k!})\ket{k}$, and let the system evolve under the Hamiltonian $H_{\mathrm{AHO}}$. In the end we perform homodyne measurements with quadrature operators $X=(b+b^{\dagger})/\sqrt{2}$ and $P=i(b^{\dagger}-b)/\sqrt{2}$ in separate experiments. With these measurement results we will be able to estimate $\braket{b}_{\alpha,t}=\braket{\alpha|e^{iH_{\mathrm{AHO}}t}be^{-iH_{\mathrm{AHO}}t}|\alpha}$, which can be exactly computed to be
\begin{equation}
\label{eq:anharmonic_oscillator_expectation}
    \braket{b}_{\alpha,t} = \alpha e^{-|\alpha|^2} e^{-i\omega t} e^{|\alpha|^2 e^{-i\xi t}}.
\end{equation}
We perform this calculation in Section~\ref{sec:learning_an_anharmonic_oscillator_appendix}.

Using \eqref{eq:anharmonic_oscillator_expectation}, we can extract the values of $\omega$ and $\xi$ from $\braket{b}_{\alpha,t}$. For $\omega$, note that $\braket{b}_{\alpha,t}/\alpha = e^{-i\omega t} + \Or(|\alpha|^2)$, and therefore we can choose $|\alpha|$ to be below a small constant so that an estimate for $\braket{b}_{\alpha,t}/\alpha$ will be close to $e^{-i\omega t}$ within some small constant distance, which enables us to apply the robust frequency estimation algorithm to estimate $\omega$ with RMSE at most $\epsilon$ using total evolution time $\Or(\epsilon^{-1})$.

For $\xi$, we can extract its value by constructing a periodically oscillating signal through
\begin{equation}
    e^{-i\xi t} = \frac{1}{|\alpha_1|^2-|\alpha_2|^2}\log\left(\frac{\alpha_2\braket{b}_{\alpha_1,t}}{\alpha_1\braket{b}_{\alpha_2,t}}\right) + 1.
\end{equation}
This enables us to estimate $\xi$ using the robust frequency estimation algorithm. Note that, once again, $\braket{b}_{\alpha_1,t}$ and $\braket{b}_{\alpha_2,t}$ only need to be estimated to constant precision, rather than $\epsilon$ precision which would result in an $\Or(\epsilon^{-2})$ scaling that would destroy the Heisenberg-limited scaling.

In the above procedure, we need to estimate the expectation of $X$ and $P$ operators, which are unbounded operators that can infinitely amplify any error in the quantum state. Fortunately, we found that we can replace them with operators $X\one{|X|\leq M}$ and $P\one{|P|\leq M}$, where $\one{|X|\leq M}=\int_{|x|\leq M} \ket{x}\bra{x}\dd x$ and $\one{|P|\leq M}$ is similarly defined. This means truncating the eigenvalues of these operators at a threshold $M=\Or(1)$. In practice, we can simply discard any $X$ and $P$ samples that are above the threshold $M$ to implement the measurement associated with these truncated operators. This fact, together with the error tolerance in the robust frequency estimation algorithm, enables us to tolerate a constant amount of error from SPAM and time evolution. 
The combined error from all sources should be below a small constant, which is sufficient for achieving arbitrarily high precision. 



\subsection{Learning two coupled anharmonic oscillators}
\label{sec:learning_two_coupled_anharmonic_oscillators}

Next, we consider a system consisting of two coupled anharmonic oscillators, where the Hamiltonian is of the following form:
\begin{equation}
\label{eq:two_mode_hamiltonian}
    H = \omega_1 b_1^{\dagger}b_1 + \omega_2 b_2^{\dagger}b_2 + h_{12}b_1^{\dagger}b_2 + h_{21}b_2^{\dagger}b_1 + \frac{\xi_1}{2}n_1(n_1-1) + \frac{\xi_2}{2}n_2(n_2-1)
\end{equation}
The goal is to learn all the coefficients $\omega_{1}$, $\omega_{2}$, $\xi_{1}$, $\xi_{2}$, and $h_{12}$ ($h_{21}=h^*_{12}$).

We first focus on learning the single-mode coefficients $\omega_1$, $\omega_2$, $\xi_1$, and $\xi_2$. To do this, we will insert random unitaries during time evolution to decouple the bosonic modes from each other. In other words, the time evolution operator undergoes the following transformation
\begin{equation}
    \label{eq:time_evolution_transformation}
    e^{-iHt} \mapsto \prod_{j=1}^{r} U_j^{\dagger}e^{-iH\tau} U_j = \prod_{j=1}^{r} e^{-iU_j^{\dagger}HU_j\tau},
\end{equation}
where the $U_j$, $j=1,2,\cdots,r$, are the random beam splitters or phase shifters that we insert, $r=t/\tau$, and the product goes from right to left. Each $U_j$ is independently drawn from a distribution that we denote by $\mathcal{D}$. In the limit of $\tau\to 0$, the dynamics can be described by an effective Hamiltonian
\begin{equation}
    \label{eq:effective_hamiltonian}
    H_{\mathrm{effective}} = \mathbb{E}_{U\sim \mathcal{D}} U^{\dagger}HU.
\end{equation}
This can be seen by considering the Taylor expansion of the time-evolved state in a small time step:
\begin{equation}
\label{eq:first_order_approx_short_time_step}
\begin{aligned}
    \mathbb{E}_{U\sim\mathcal{D}}[e^{-iU^{\dagger}HU\tau}\rho e^{iU^{\dagger}HU\tau}] &= \rho - i\tau \mathbb{E}_{U\sim\mathcal{D}}[[U^{\dagger}HU,\rho]] + \Or(\tau^2) \\ 
    &= e^{-i\mathbb{E}_{U\sim\mathcal{D}}[U^{\dagger}HU]\tau}\rho e^{i\mathbb{E}_{U\sim\mathcal{D}}[U^{\dagger}HU]\tau} + \Or(\tau^2).
\end{aligned}
\end{equation}
Note that the above is not a rigorous proof, because the $\Or(\tau^2)$ residue is an unbounded operator. We provide a rigorous bound of how far the actual dynamics deviate from the limiting effective dynamics with finite $\tau>0$ in Section~\ref{sec:deviation_from_effective_dynamics_appendix}. 
The above procedure introduces additional randomness to our protocol, but it does not introduce any sample complexity overhead, because we only need the final quantum states to be close in terms of the trace distance.

To learn all the single mode coefficients, we let the unitary $U$ drawn from the distribution $\mathcal{D}$ be 
\begin{equation}
    U = e^{-i\theta n_1},\quad \theta\sim\mathcal{U}([0,2\pi]).
\end{equation}
Here $\mathcal{U}([0,2\pi])$ is the uniform distribution over $[0,2\pi]$.
We can then compute the effective Hamiltonian
\begin{equation}
\label{eq:effective_hamiltonian_single_mode}
    H_{\mathrm{effective}} = \frac{1}{2\pi}\int_0^{2\pi} e^{i\theta n_1}He^{-i\theta n_1}\dd \theta = \omega_1 n_1 + \omega_2 n_2 + \frac{\xi_1}{2}n_1(n_1-1) + \frac{\xi_2}{2}n_2(n_2-1).
\end{equation}
In other words, the coupling term $h_{12}b_1^{\dagger}b_2 + h_{21}b_2^{\dagger}b_1$ is cancelled in the process, due to the equality $e^{i\theta n_1}b_1 e^{-i\theta n_1}=e^{-i\theta}b_1$.
We can interpret this procedure as enforcing a particle number conservation on the first bosonic mode. In the effective Hamiltonian, the two bosonic modes are no longer coupled together, and therefore we can apply the learning algorithm described in Section~\ref{sec:learning_an_anharmonic_oscillator} to learn the parameters of the two modes separately. For a more detailed description of the protocol see Section~\ref{sec:single_mode_coeff_appendix}.

Next, we will learn the coupling coefficient $h_{12}$. We will use the following unitaries
\begin{equation}
    U_{x}(\theta) = e^{i\theta (b_1^{\dagger}b_2+b_2^{\dagger}b_1)},\quad U_{y}(\theta) = e^{\theta (b_1^{\dagger}b_2-b_2^{\dagger}b_1)}.
\end{equation}
Our protocol is based on the observation that under a single-particle basis rotation, $h_{12}$ can be estimated from the new single-mode coefficients. More precisely, we let $\tilde{b}_1 = U_y(\pi/4)b_1 U_y^{\dagger}(\pi/4)$, $\tilde{b}_2 = U_y(\pi/4)b_2 U_y^{\dagger}(\pi/4)$, and the new bosonic modes will be related to the old ones through
\begin{equation}
\label{eq:single_particle_basis_rotation_Y}
    \begin{pmatrix}
        \tilde{b}_1 \\
        \tilde{b}_2
    \end{pmatrix}
    =
    \begin{pmatrix}
        \cos(\pi/4) & \sin(\pi/4) \\
        -\sin(\pi/4) & \cos(\pi/4)
    \end{pmatrix}
    \begin{pmatrix}
        b_1 \\
        b_2
    \end{pmatrix}.
\end{equation}
We will then rewrite the Hamiltonian \eqref{eq:two_mode_hamiltonian} in terms of $\tilde{b}_1$ and $\tilde{b}_2$. The quadratic part of $H$ can be written as
$\tilde{\omega}_1 \tilde{b}_1^{\dagger}\tilde{b}_1 + \tilde{\omega}_2 \tilde{b}_2^{\dagger}\tilde{b}_2 + \tilde{h}_{12}\tilde{b}_1^{\dagger}\tilde{b}_2 + \tilde{h}_{21}\tilde{b}_2^{\dagger}\tilde{b}_1$, where
\begin{equation}
\label{eq:omega_tilde_expression}
    \tilde{\omega}_1 = \frac{\omega_1+\omega_2}{2}+\Re h_{12}.
\end{equation}
Therefore, $\Re h_{12}$ can be estimated if we can learn $\tilde{\omega}_1$.
The quartic part becomes more complicated, but the procedure we describe next will yield an effective Hamiltonian of a simpler form.

In our protocol for learning $\Re h_{12}$, we will let the random unitaries $U_j$ in \eqref{eq:time_evolution_transformation} be 
\begin{equation}
    U_j = U_x(-\theta/2),\quad \theta\sim\mathcal{U}([0,2\pi]),
\end{equation}
where $\mathcal{U}([0,2\pi])$ denotes the uniform distribution on $[0,2\pi]$. Note that $e^{-i\theta \tilde{n}_1}=e^{-i(\theta/2)(n_1+n_2)}U_x(-\theta/2)$ where $\tilde{n}_1=\tilde{b}_1^{\dagger}\tilde{b}_1$, and because the total particle number $n_1+n_2$ is conserved, the random unitary $U_x(-\theta/2)$ is equivalent to $e^{-i\theta \tilde{n}_1}$ up to a global phase. This random unitary, as in \eqref{eq:effective_hamiltonian_single_mode}, results in an effective Hamiltonian in which $\tilde{n}_1$ is conserved. The effective Hamiltonian can be written as the following
\begin{equation}
    H_{\mathrm{effective}} = \tilde{\omega}_1 \tilde{n}_1 + \tilde{\omega}_2 \tilde{n}_2 + \frac{\tilde{\xi}_{11}}{2}\tilde{n}_1(\tilde{n}_{1}-1) + \frac{\tilde{\xi}_{22}}{2}\tilde{n}_2(\tilde{n}_2-1) + \tilde{\xi}_{12}\tilde{n}_1\tilde{n}_2.
\end{equation}
In this effective Hamiltonian, the two bosonic modes $\tilde{b}_1$ and $\tilde{b}_2$ are still coupled through the term $\tilde{\xi}_{12}\tilde{n}_1\tilde{n}_2$. However, because the particle numbers on both modes are conserved, we can simply initialize the system with no particle on the mode $\tilde{b}_2$, and the coupling term will have no effect. More specifically, the initial state we use is $U_y(\pi/4)\ket{\alpha}\ket{0}$, which is an $\alpha$-eigenstate for $\tilde{b}_1$ and a $0$-eigenstate for $\tilde{b}_2$. The effective Hamiltonian can then be further reduced to
\begin{equation}
    H_{\mathrm{effective}}' = \tilde{\omega}_1 \tilde{n}_1 + \frac{\tilde{\xi}_{11}}{2}\tilde{n}_1(\tilde{n}_{1}-1).
\end{equation}
This enables us to learn $\tilde{\omega}_1$ using the single-mode protocol in Section~\ref{sec:learning_an_anharmonic_oscillator}, which then gives us $\Re h_{12}$ through \eqref{eq:omega_tilde_expression}. When performing homodyne measurement in the end, we also need to apply $U_y(-\pi/4)$ to rotate back to the original single-particle basis. We write down the quantum state we get right before measurement to summarize the whole procedure:
\begin{equation}
    U_y\left(-\frac{\pi}{4}\right)\prod_{j=1}^r\left(U_x\left(\frac{\theta_j}{2}\right)e^{-iH\tau}U_x\left(-\frac{\theta_j}{2}\right)\right)U_y\left(\frac{\pi}{4}\right)\ket{\alpha}\ket{0},
\end{equation}
where all $\theta_j$ are independently drawn from the uniform distribution over $[0,2\pi]$.

The above procedure yields $\Re h_{12}$. For $\Im h_{12}$, we only need to switch the roles of $U_x(\theta)$ and $U_y(\theta)$ and go through the same procedure. For a more detailed discussion, see Section~\ref{sec:coupling_coeff_appendix}. 


\subsection{Learning an $N$-mode system}
\label{sec:learning_an_N_mode_system}

So far, we have concerned ourselves with learning small systems with one or two modes, but the protocol we develop can be easily generalized to $N$-mode systems. This section will focus on $N$ bosonic modes arranged on a 1D chain. For the more general situation with a bounded degree graph, e.g., $D$-dimensional square lattice, Kagome lattice, etc., see Section~\ref{sec:divide_and_conquer_for_N_mode_appendix}. 
The Hamiltonian is described by \eqref{eq:hamiltonian_general}, where the bosonic modes are labeled $1,2,\cdots, N$, and $i$ and $j$ are adjacent only when $j=i\pm 1$.

For this $N$-mode system, we consider a divide-and-conquer approach. We will apply random unitaries so that in the effective dynamics, the system is divided into clusters of one or two modes, each of which does not interact with the rest of the system. In this way, we can learn the parameters associated with each cluster independently and in parallel using our protocol in Section~\ref{sec:learning_two_coupled_anharmonic_oscillators}. 

More specifically, we apply random unitaries in the same way as described in \eqref{eq:time_evolution_transformation}. The random unitary $U_j$ is first chosen to be
\begin{equation}
    U_j = \prod_{k=1}^{\lfloor N/3\rfloor} e^{-i\theta_{3k} n_{3k}},
\end{equation}
where the random variables $\theta_{3k}$ are independently drawn from $\mathcal{U}([0,2\pi])$, the uniform distribution over $[0,2\pi]$. Randomly applying the unitaries from this distribution enforces particle number conservation on sites with indices that are integer multiples of $3$. Therefore, any Hamiltonian term $b_i^{\dagger}b_j$ that involves sites $3, 6, 9,\cdots$ are canceled. The effective Hamiltonian is
\begin{equation}
    \begin{aligned}
        H &= \omega_1 n_1 + \omega_2 n_2 + h_{12}b_1^{\dagger}b_2 + h_{21}b_2^{\dagger}b_1 \\
        &+ \omega_4 n_4 + \omega_5 n_5 + h_{45}b_4^{\dagger}b_5 + h_{54}b_5^{\dagger}b_4 \\
        &+ \cdots \\
        &+\sum_{i} \frac{\xi_i}{2}n_i(n_i-1),
    \end{aligned}
\end{equation}
where we did not include the terms $\omega_3 n_3$, $\omega_6 n_6$, etc., because they only contribute a global phase.
In this Hamiltonian, the two modes $1$ and $2$ form a cluster: they only interact with each other but not with the rest of the system. The same is true for modes 5 and 6, 7 and 8, etc. We can then apply the two-mode protocol in Section~\ref{sec:learning_two_coupled_anharmonic_oscillators} to learn all coefficients associated with modes 1, 2, 5, 6, ... Note that coefficients associated with different clusters can be learned in parallel in the same experiment.

Other coefficients remain to learn, such as $\omega_3$, $h_{23}$, and $h_{34}$. We can adopt the same strategy but choose the random unitary $U_j = \prod_{k=0}^{\lfloor N/3\rfloor-1} e^{-i\theta_{3k+1} n_{3k+1}}$ so that modes 2 and 3, 5 and 6, etc. now form clusters. Similarly, we can let modes 3 and 4, 6 and 7, etc., form clusters. In this way, we can learn all the coefficients in the Hamiltonian using three different clustering schemes. The total evolution time required for carrying out all experiments will only be three times the cost of a two-mode protocol because different clusters can be learned in parallel.

More generally, we consider a system whose interaction can be described by a bounded-degree graph. We can design similar clustering schemes based on an appropriate coloring of its link graph, i.e., the graph whose vertices are the edges of the original graph. The overhead introduced will be quadratic in the degree of the original graph and independent of the system size $N$. This is discussed in more detail in Section~\ref{sec:divide_and_conquer_for_N_mode_appendix}. 

\section{Discussion}

In this work, we propose a protocol to learn a class of interacting bosonic Hamiltonians with Heisenberg-limited scaling. Our protocol uses only elements of linear optics that that can be implemented on various experimental platforms. Besides achieving the Heisenberg-limited scaling, our protocol can also tolerate a constant amount of 
SPAM noise thanks to the robust frequency estimation subroutine discussed in Section~\ref{sec:learning_an_anharmonic_oscillator}. As a part of the protocol, we also propose a method to enforce symmetry on the effective Hamiltonian governing the system's evolution as discussed in more detail in Section~\ref{sec:enforcing_symmetry_using_random_unitaries}.

To our knowledge, our work is the first to propose a method that learns interacting bosonic Hamiltonians with Heisenberg-limited scaling in a scalable way. However, many open problems remain to be solved in this research direction. In this work, we only consider the particle-number preserving Hamiltonian in \eqref{eq:hamiltonian_general}, but realistic Hamiltonians may contain terms that do not preserve the particle number, such as the coupling term in the Jaynes–Cummings model \cite{jaynes1963comparison} and capacitive and inductive couplings between superconducting circuits \cite{krantz2019quantum}. Also, higher-order anharmonic effects beyond the fourth order may be non-negligible in certain quantum systems.  

In our protocol, we need to apply random unitaries with a frequency that depends on the target precision. For higher precision, the speed of applying these unitaries will also need to be faster, which may be a problem for experimental implementation. A possible solution is to use some form of continuous control as considered in \cite{stark2017narrow,gordon2008optimal,fanchini2007continuously,bermudez2012robust,xu2012coherence,dutkiewicz2023advantage}. Moreover, since our protocol requires letting the system evolve coherently for $\Or(\epsilon^{-1})$ times to reach $\epsilon$ precision, the achievable precision will be limited by quantum noise such as dephasing and photon losses that limit the coherence time of most experimental Bosonic systems. 
It would be therefore interesting to explore whether noise suppression techniques such as dynamical decoupling \cite{sekatski2016dynamical, schmitt2017submillihertz,arenz2017dynamical,heinze2019universal} and quantum error correction \cite{dur2014improved, arrad2014increasing,kessler2014quantum,herrera2015quantum,demkowicz2017adaptive,Zhou2017AchievingTH} can mitigate this limitation and whether they can be incorporated into our protocol in a useful and scalable way.

Random Clifford unitaries played a crucial role in the classical shadow formalism \cite{huang2020predicting,elben2022randomized} as well as Hamiltonian learning \cite{HuangTongFangYuan2023learning}. Similarly, one may wonder whether the random gaussian unitaries used in this work can be useful for other quantum information tasks for bosonic systems, such as classical shadow tomography for continuous-variable systems \cite{gu2022efficient,becker2022classical,gandhari2022continuous}.

\section{Methods}

\subsection{Enforcing symmetry using random unitaries}
\label{sec:enforcing_symmetry_using_random_unitaries}


This section will describe how to enforce symmetry using random unitaries. This strategy is similar in spirit to the symmetry protection strategies in \cite{nguyen2022digital,halimeh2021gauge}, but is easier to scale to an $N$-mode system in the current setting.

Let us first consider the general case where we have a compact Lie group $G$ that describes the symmetry we want in the quantum system. Our quantum system is evolving under a Hamiltonian $H$ that does not necessarily satisfy this symmetry, i.e., there may exist $g\in G$ such that $gHg^{-1}\neq H$ (here we equate an element of the Lie group with its matrix representation). We want to have the system evolve under an effective Hamiltonian $H_{\mathrm{effective}}$ that satisfies the symmetry, i.e.,
\begin{equation}
    gH_{\mathrm{effective}}g^{-1} = H_{\mathrm{effective}}.
\end{equation}

We achieve this by inserting random unitaries in the same way as in \eqref{eq:time_evolution_transformation}, which gives us an effective Hamiltonian according to \eqref{eq:effective_hamiltonian}. The distribution from which we draw the random unitaries is the Haar measure on $G$, which we denote by $\mu$. The  effective Hamiltonian can be computed as
\begin{equation}
    H_{\mathrm{effective}} = \int gHg^{-1}\mu(\dd g).
\end{equation}
When the Hamiltonian $H$ is unbounded, the above equality may only hold in a weak sense.
We can verify that this effective Hamiltonian satisfies the desired symmetry because
\begin{equation}
    g' H_{\mathrm{effective}} g'^{-1} = \int g'gH(g'g)^{-1}\mu(\dd g) = \int g'gH(g'g)^{-1}\mu(\dd (g' g)) = H_{\mathrm{effective}}.
\end{equation}
Here we have used the property of the Haar measure that $\mu(\dd (g' g))=\mu(\dd g)$.

It may not be easy to randomly apply elements from the symmetry group $G$. Still, in our learning protocol, we will only enforce symmetries that are either U(1) or $\mathrm{U}(1)\times \mathrm{U}(1)\times\cdots\times \mathrm{U}(1)=U(1)^{\times N}$, where sampling can easily be done for each U(1) group separately.

\section{Acknowledgements}
The authors thank Matthias Caro for helpful discussions.
Y.T. acknowledges funding from the U.S. Department of Energy Office of Science, Office of Advanced Scientific Computing Research (DE-NA0003525, and DE-SC0020290). Work supported by DE-SC0020290 is supported by the DOE QuantISED program through the theory consortium ``Intersections of QIS and Theoretical Particle Physics'' at Fermilab. The Institute for Quantum Information and Matter is an NSF Physics Frontiers Center. The work of H.L. and L.Y. is partially supported by National Science Foundation under awards DMS-2011699 and DMS-2208163. T.G. acknowledges funding provided by the Institute for Quantum Information and Matter and the Quantum Science and Technology Scholarship of the Israel Council for Higher Education.

\bibliographystyle{unsrtnat}
\bibliography{ref}

\appendix
\section{Robust frequency estimation}
\label{sec:robust_frequency_estimation_appendix}

Our main tool to achieve the Heisenberg limit is an algorithm to estimate the frequency from a complex-valued signal with Heisenberg-limited scaling. This algorithm resembles the robust phase estimation algorithm in \cite{kimmel2015robust} but is different in that we can deal with cases where the frequency we want is encoded in the expectation value rather than the probability.

More precisely, we assume access to a signal $Z_{\delta}(t)$ that is close to $e^{-i\omega t}$, where $|\omega|<W$, by a constant amount of error in both the phase and the amplitude with probability $1-\delta$, where $\delta$ can be tuned. It is also reasonable to assume that for smaller $\delta$, generating the corresponding $Z_{\delta}(t)$ will be more costly, i.e., requiring longer evolution time. Our algorithm then uses $Z_{\delta}(t)$ for different values of $\delta$ and $t$ to refine the estimation of $\omega$ iteratively. In each iteration, we use the result from the previous iteration to get an estimate $\theta_j$ satisfying
\begin{equation}
    \omega/\wt{W}\in \left(\theta_j-\frac{\pi}{3\cdot2^j}, \theta_j+\frac{\pi}{3\cdot2^j}\right),~ \mod 2\pi,
\end{equation}
where 
$\wt{W}=3W/\pi$
is a normalization factor. A detailed algorithm description can be found in Algorithm~\ref{alg:pruning}, adapted from \cite{ni2023low}. 


\begin{algorithm}[ht]
	\caption{Robust frequency estimation}
	\label{alg:pruning}
	\begin{algorithmic}
		\STATE{\textbf{Input:} $\epsilon$: target accuracy, $\delta$: upper bound of the failure probability, $W$: upper bound for $\omega$. }
		\STATE{Let $\wt{W} = \frac{3W}{\pi}$ and $J = \max\{1, \lceil \log_2(4\pi \wt{W}/(3\epsilon))\rceil$\}.}
		\STATE{Set $\theta_{-1} = 0$.}
		\FOR{$j = 0,1,\ldots,J-1$}
        \STATE{Set $t_j=2^j/\wt{W}$.}
        \STATE{Calculate $\delta_j= \frac{27\epsilon^2}{\pi^2\wt{W}^2}\frac{2^{3j-6}}{(2^J-1)}$.}
		\STATE{Define a candidate set $S_j = \left\{\frac{2k\pi-\arg Z_{\delta_j}(t_j)}{2^j} \right\}_{k=0,\ldots,2^j-1}$.}
		\STATE{$\theta_j = \argmin_{\theta\in S_j}\ntp{\theta - \theta_{j-1}}$. }	
		\ENDFOR
        \STATE{Add a proper integer multiple of $2\pi$ to $\theta_{J-1}$ such that $|\theta_{J-1}|\leq\pi$.}
		\STATE{\textbf{Output:} $\wt{W}\theta_{J-1}$ as an approximation to $\omega$. }
	\end{algorithmic}
\end{algorithm}

In the following theorem, we analyze the performance of the above algorithm for a fixed set of values for $\delta$ in each iteration. We will then optimize these values to achieve the $\Or(\epsilon^{-1})$ scaling in Corollary~\ref{cor:heisenberg_limit_single_frequency}.
\begin{thm}
\label{thm:single_frequency_est_unsimplified}
    Suppose that $|\omega|<W$ is known in advance and that we have access to a signal $Z_{\delta}(t)$ such that 
    \begin{itemize}
        \item $|Z_{\delta}(t)|=1$,
        \item $|Z_{\delta}(t)-e^{-i(\omega t+f(t))}|\leq \eta$ with probability at least $1-\delta$, where $\sup_t|f(t)|\le C_f <{\pi/3}$,
        \item $2\arcsin \frac{\eta}{2} + C_f\le \frac{\pi}{3}$,
        \item generating $Z_{\delta}(t)$ requires evolution time $C_Z t(\log(\delta^{-1})+1)$,
    \end{itemize}
    then we can produce an estimate $\hat{\omega}$ such that
    \begin{equation}
    \label{eq:MSE_bound_unsimplified}
        \mathbb{E}[|\hat{\omega}-\omega|^2]\leq \sum_{j=0}^{J-1}E_j^2 \delta_j + \epsilon^2/4, 
    \end{equation}
    with total evolution time at most
    \begin{equation}
    \label{eq:total_evolution_time_unsimplified}
        \frac{\pi C_Z}{\wt{W}}\sum_{j=0}^{J-1}2^j(\log(\delta_j^{-1})+1),
    \end{equation}
    where 
    \begin{equation}
    \label{eq:choice_of_params_single_frequency_est}
        E_0=2\pi,\quad  E_j=4\pi \wt{W}/(3\cdot 2^j)\ \forall j\geq 1,\quad J = \lceil \log_2(4\pi \wt{W}/(3\epsilon))\rceil,\quad \wt{W} = 3W/\pi,
    \end{equation}
    and each $\delta_j\in (0,1]$ is arbitrarily chosen.
\end{thm}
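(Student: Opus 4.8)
The plan is to track the \emph{normalized} frequency $\phi := \omega/\wt{W}$, which the hypothesis $|\omega|<W$ together with $\wt{W}=3W/\pi$ places in $(-\pi/3,\pi/3)$, and to show that the iterate $\theta_j$ of Algorithm~\ref{alg:pruning} approaches $\phi$ (modulo $2\pi$) at a geometric rate whenever the signal measurements succeed. The first step is a ``good-candidate'' lemma. Since $|Z_{\delta_j}(t_j)|=1$, the bound $|Z_{\delta_j}(t_j)-e^{-i(\omega t_j+f(t_j))}|\le\eta$ is equivalent, via $|e^{ia}-e^{ib}|=2|\sin((a-b)/2)|$, to the phase estimate $\ntp{\arg Z_{\delta_j}(t_j)+\omega t_j+f(t_j)}\le 2\arcsin(\eta/2)$, holding with probability at least $1-\delta_j$. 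Because $t_j=2^j/\wt{W}$ gives $\omega t_j=2^j\phi$, dividing by $2^j$ shows that one of the $2^j$ equally spaced points of $S_j$, say $g_j$, satisfies $\ntp{g_j-\phi}\le(2\arcsin(\eta/2)+C_f)/2^j\le\pi/(3\cdot 2^j)$, where the last inequality is exactly the third hypothesis.

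Next I would prove by induction on $j$ the invariant that, if the measurements in rounds $0,\dots,j$ all succeed, then $\ntp{\theta_j-\phi}\le\pi/(3\cdot 2^j)$. The base case is immediate since $S_0=\{g_0\}$. For the inductive step the points of $S_j$ are spaced $2\pi/2^j$ apart, so it suffices to show $g_j$ is the candidate closest to $\theta_{j-1}$: the triangle inequality gives $\ntp{\theta_{j-1}-g_j}\le\pi/(3\cdot 2^{j-1})+\pi/(3\cdot 2^j)=\pi/2^j$, exactly half the spacing, while every other candidate sits at distance at least $\pi/2^j$. Here the strict hypothesis $C_f<\pi/3$ is what makes $g_j$ the \emph{unique} minimizer (a tie would let the argmin jump to a point a full spacing away, which would violate the invariant), so the argmin step returns $\theta_j=g_j$. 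Iterating to $j=J-1$ and using $2^{J-1}\ge 2\pi\wt{W}/(3\epsilon)$ from the definition of $J$ yields $|\hat\omega-\omega|=\wt{W}\ntp{\theta_{J-1}-\phi}\le\epsilon/2$ on the all-success event, which produces the $\epsilon^2/4$ term.

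To handle failures I would decompose the probability space by the index of the \emph{first} failed round. If round $j$ is the first failure, then $\theta_{j-1}$ still obeys the invariant, while from round $j$ onward the only guarantee is the generic one: choosing the nearest grid point forces $\ntp{\theta_k-\theta_{k-1}}\le\pi/2^k$. Summing this geometric drift, $\ntp{\theta_{J-1}-\theta_{j-1}}\le\sum_{k\ge j}\pi/2^k=2\pi/2^j$, and combining with $\ntp{\theta_{j-1}-\phi}\le 2\pi/(3\cdot 2^j)$, bounds $|\hat\omega-\omega|$ by $E_j=\Or(\wt{W}/2^j)$ on this event, with the $j=0$ case controlled crudely through the output range $|\theta_{J-1}|\le\pi$ and $|\phi|<\pi/3$. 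Since the first failure occurs at round $j$ with probability at most $\delta_j$, the law of total expectation assembles $\mathbb{E}[|\hat\omega-\omega|^2]\le\sum_{j=0}^{J-1}E_j^2\delta_j+\epsilon^2/4$. The evolution-time bound \eqref{eq:total_evolution_time_unsimplified} then follows by summing the per-round cost $C_Z t_j(\log(\delta_j^{-1})+1)$ with $t_j=2^j/\wt{W}$.

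I expect the main obstacle to be the failure analysis: showing that one corrupted measurement cannot inflate the final error beyond the claimed $E_j$ scale relies on the drift estimate above, and pinning the exact constants of $E_j$ (the drift sum and the induction saturate the half-spacing inequality, so the boundary cases and tie-breaking must be handled precisely) is the delicate part. By contrast, the amplitude-to-phase conversion, the geometric sums, and the reduction of $\ntp{\cdot}$ bounds to genuine absolute values near $\phi$ are routine.
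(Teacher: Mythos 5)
Your proposal is correct and, in its skeleton, is the same proof as the paper's: the normalization $\phi=\omega/\wt{W}$, the conversion of the unit-modulus amplitude bound into a phase bound placing one good candidate $g_j\in S_j$ within $\pi/(3\cdot 2^j)$ of $\phi$, the half-spacing argmin induction on the all-success prefix, the decomposition over the first failed round with $\Pr[j_{\mathrm{fail}}=j]\le\delta_j$, and the summation of per-round costs all match the paper's argument step for step. The one genuine divergence is the post-failure error bound, and there your version is the more defensible one. You control the post-failure iterates by the drift estimate $\ntp{\theta_k-\theta_{k-1}}\le \pi/2^k$, so that $\ntp{\theta_{J-1}-\theta_{j-1}}\le 2\pi/2^j$ and the conditional error is at most $\wt{W}\left(2\pi/2^j+2\pi/(3\cdot 2^j)\right)=8\pi\wt{W}/(3\cdot 2^j)=2E_j$. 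The paper instead asserts that $\theta_{J-1}$ remains (mod $2\pi$) inside the interval of half-width $\pi/(3\cdot 2^{j-1})$ around $\theta_{j-1}$, which yields exactly $E_j$; but that assertion does not follow from the algorithm --- a single post-failure argmin step can already move the iterate by up to $\pi/2^j>\pi/(3\cdot 2^{j-1})$, and the cumulative drift can approach $\pi/2^{j-1}$, three times the claimed half-width. So your factor-of-$2$ loss ($E_j\mapsto 2E_j$, hence $\sum_{j}4E_j^2\delta_j$ in the MSE bound) is the price of a correct accounting; it is immaterial downstream, since the $\delta_j$ are arbitrary and Corollary~\ref{cor:heisenberg_limit_single_frequency} needs only the order of magnitude. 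One small caveat on your tie-breaking remark: uniqueness of the argmin requires strictness of $2\arcsin(\eta/2)+C_f<\pi/3$, which the stated hypotheses allow to fail at equality, and $C_f<\pi/3$ alone does not rule out a tie; the paper has exactly the same looseness (it silently upgrades the hypothesis to a strict inequality), so this is an edge case to flag in both arguments rather than a defect specific to yours.
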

\begin{proof}
Denote $\omega/\wt{W}=\frac{\omega\pi}{3W}$ by $\tilde{\omega}$, then $|\tilde{\omega}|<\pi/3$. 
We proceed by choosing a sequence of $t_j$, $j=0, 1, \ldots, J-1$ and refining the estimation of $\tilde{\omega}$ progressively. In each iteration we generate a signal $Z_{\delta_j}(t_j)$ for arbitrarily chosen $\delta_j$. 

First, let $t_0=1/\wt{W}$, then with probability at least $1-\delta_0$,
\[
|Z_{\delta_0}(t_0)-e^{-i(\tilde{\omega} +f(t_0))}|\leq \eta,
\]
which yields
\[
|\tilde{\omega}-(-\arg Z_{\delta_0}(t_0))|\le 2\arcsin \frac{\eta}{2} + C_f<\pi/3 \mod 2\pi.
\]
Thus $\tilde{\omega}\in (-\arg Z_{\delta_0}(t_0)-\pi/3+2k\pi, -\arg Z_{\delta_0}(t_0)+\pi/3+2k\pi)$ for some integer $k$. Let $\theta_{-1} = 0$ and $S_0 = \{-\arg Z_{\delta_0}(t_0)\}$, then by choosing $\theta_0 = \argmin_{\theta\in S_0}|\theta-\theta_{-1}|_{2\pi}=-\arg Z_{\delta_0}(t_0)$, we obtain
\[
\tilde{\omega}\in (\theta_0-\frac{\pi}{3}, \theta_0+\frac{\pi}{3}),~ \mod 2\pi.
\]
Here $\ntp{\theta}$, is defined to be the minimum distance to $0$ modulo $2\pi$, i.e., $\ntp{\theta} = \pi - |(\theta \mod2\pi)-\pi|$. 

At step $j$, we set $t_j = 2t_{j-1}$, $S_j = \left\{\frac{2k\pi-\arg Z_{\delta_j}(t_j)}{2^j} \right\}_{k=0,\ldots,2^j-1}$, and $\theta_j = \argmin_{\theta\in S_j}\ntp{\theta - \theta_{j-1}}$. 
Now we are ready to prove that if
\begin{equation}
\label{eq:jth_iteration_correct}
|Z_{\delta_{j'}}(t_{j'})-e^{-i(\tilde{\omega}2^{j'} +f(t_{j'}))}|\leq \eta,
\end{equation}
for all $0\leq j'\leq j$, then
\begin{equation}
\label{eq:est}
\tilde{\omega}\in (\theta_j-\frac{\pi}{3\cdot2^j}, \theta_j+\frac{\pi}{3\cdot2^j}),~ \mod 2\pi,
\end{equation}
for all $j$ by induction. The case $j=0$ is already proved. Suppose \eqref{eq:est} holds for $j-1$. 
Because of \ref{eq:jth_iteration_correct}, we have
\[
|\tilde{\omega}2^j-(-\arg Z_{\delta_j}(t_j))|\le 2\arcsin \frac{\eta}{2} + C_f<\pi/3 \mod 2\pi.
\]
Thus
\[
\tilde{\omega}\in I_k := \left(\frac{2k\pi -\arg Z_{\delta_j}(t_j)-\pi/3}{2^j}, \frac{2k\pi -\arg Z_{\delta_j}(t_j)+\pi/3}{2^j}\right) \mod 2\pi,
\]
for some $k=0,1, \ldots 2^{j}-1$. Notice that $\dist(I_k, I_{k'})\ge \frac{\pi}{2^{j-1}}-\frac{\pi}{3\cdot2^{j-1}} = \frac{\pi}{3\cdot2^{j-2}}$ for any $k\not=k'$, and that the length of the previous estimation $(\theta_{j-1}-\frac{\pi}{3\cdot2^{j-1}}, \theta_{j-1}+\frac{\pi}{3\cdot2^{j-1}})$ is exactly $\frac{\pi}{3\cdot2^{j-2}}$, we can ensure that only one $k^*$ satisfies $I_k\cap(\theta_{j-1}-\frac{\pi}{3\cdot2^{j-1}}, \theta_{j-1}+\frac{\pi}{3\cdot2^{j-1}})\not=\varnothing, \mod 2\pi$. Moreover, the corresponding $k^*$ satisfies 
$\frac{2k^*\pi-\arg Z_{\delta_j}(t_j)}{2^j} = \argmin_{\theta\in S_j}|\theta-\theta_{j-1}|_{2\pi}$, since
\[\left|\frac{2k^*\pi-\arg Z_{\delta_j}(t_j)}{2^j}-\theta_{j-1}\right|_{2\pi}\le \left|\frac{2k^*\pi-\arg Z_{\delta_j}(t_j)}{2^j}-\tilde{\omega}\right|_{2\pi} + \left|\tilde{\omega}-\theta_{j-1}\right|_{2\pi} < \frac{\pi}{3\cdot2^j} + \frac{\pi}{3\cdot2^{j-1}} = \frac{\pi}{2^j},\]
and 
\[
\left|\frac{2k\pi-\arg Z_{\delta_j}(t_j)}{2^j}-\theta_{j-1}\right|_{2\pi}\ge \frac{\pi}{2^{j-1}} - \left|\frac{2k^*\pi-\arg Z_{\delta_j}(t_j)}{2^j}-\theta_{j-1}\right|_{2\pi} > \frac{\pi}{2^{j-1}} - \frac{\pi}{2^j} = \frac{\pi}{2^j}
\]
for any $k\not=k^*$. Now we have proved \eqref{eq:est}. 

In the end, notice that \eqref{eq:est} has an ambiguity of modulus $2\pi$, we add a proper integer multiple of $2\pi$ to $ \theta_{J-1}$ such that $|\theta_{J-1}|\leq \pi$. We then choose this adjusted $\theta_{J-1}$ as our estimate for $\tilde{\omega}$, and our estimate for $\omega$ is $\wt{W}\theta_{J-1}=:\hat{\omega}$. 
From the above analysis we can see that if \eqref{eq:jth_iteration_correct} holds for $0\leq j'\leq j-1$, which means that all the iterations from $0$ to $j-1$ are successful, then by \eqref{eq:est}, $\tilde{\omega}$ is contained in $(\theta_{j-1}-\pi/(3\cdot 2^{j-1}),\theta_{j-1}+\pi/(3\cdot 2^{j-1}))+2k\pi$ for some integer $k$, and our estimate $\theta_{J-1}$ is contained in $(\theta_{j-1}-\pi/(3\cdot 2^{j-1}),\theta_{j-1}+\pi/(3\cdot 2^{j-1}))+2k'\pi$ for some integer $k'$. Since $|\tilde{\omega}|<\pi/3$, we have $((\theta_{j-1}-\pi/(3\cdot 2^{j-1}),\theta_{j-1}+\pi/(3\cdot 2^{j-1}))+2k\pi)\subset (-\pi, \pi)$, and then $((\theta_{j-1}-\pi/(3\cdot 2^{j-1}),\theta_{j-1}+\pi/(3\cdot 2^{j-1}))+2k'\pi)\cap [-\pi, \pi]=\varnothing$ if $k'\not=k$. Hence we must have $k=k'$ since $|\theta_{J-1}|\leq\pi$. Therefore the error in the normalized $\tilde{\omega}$ is at most $E_j/\wt{W}=4\pi/(3\cdot 2^j)$, for $j=1,2,\cdots,J-1$. If the very first iteration fails, the error is at most $E_0/\wt{W}=2\pi$. If all the iterations are successful, then by \eqref{eq:est} and the argument above, $|\theta_{J-1}-\tilde{\omega}|\leq \pi/(3\cdot 2^{J-1})\leq \epsilon/(2\wt{W})$. From these observations, we will compute the expected error.

We define the random variable $j_{\mathrm{fail}}$ to be the first iteration that fails, i.e., 
\begin{equation}
    |Z_{\delta_{j'}}(t_{j'})-e^{-i(\tilde{\omega}2^{j'} +f(t_{j'}))}|\leq \eta,\ \forall j'< j_{\mathrm{fail}},\quad |Z_{\delta_{j_{\mathrm{fail}}}}(t_{j_{\mathrm{fail}}})-e^{-i(\tilde{\omega}2^{j_{\mathrm{fail}}} +f(t_{j_{\mathrm{fail}}}))}|> \eta.
\end{equation}
If such a $j_{\mathrm{fail}}$ cannot be found, i.e., all iterations are successful, then we let $j_{\mathrm{fail}}=J$.
From the above analysis, conditional on $j_{\mathrm{fail}}=j<J$, the error will be at most $E_j/\wt{W}$. In other words $\mathbb{E}[|\tilde{\omega}-\theta_{J-1}|^2|j_{\mathrm{fail}}=j]\leq E_j^2/\wt{W}^2$. If $j=J$, then the error is at most $\epsilon/(2\wt{W})$. Also, we have
\begin{equation}
    \Pr[j_{\mathrm{fail}}=j] = (1-\delta_0)(1-\delta_1)\cdots (1-\delta_{j-1})\delta_j\leq \delta_j.
\end{equation}
Therefore the expected square error is
\begin{equation}
\begin{aligned}
    \mathbb{E}[|\omega-\hat{\omega}|^2] &= \wt{W}^2 \mathbb{E}[|\tilde{\omega}-\theta_{J-1}|^2] \\
    &=\wt{W}^2 \sum_{j=0}^{J} \mathbb{E}[|\tilde{\omega}-\theta_{J-1}|^2|j_{\mathrm{fail}}=j]\Pr[j_{\mathrm{fail}}=j] \\
    &\leq \sum_{j=0}^{J-1} E_j^2 \delta_j + \epsilon^2/4.
\end{aligned}
\end{equation}
This proves \eqref{eq:MSE_bound_unsimplified}. Generation of each $Z_{\delta_j}(t_j)$ requires an evolution time of $C_Z t_j(\log(\delta_j^{-1})+1)$, and hence we have total evolution time \eqref{eq:total_evolution_time_unsimplified} by adding them up.  
\end{proof}

In the theorem above, we have left a great deal of flexibility in choosing $\delta_j$. Below, we will try to answer that if we want the MSE to satisfy $\mathbb{E}[|\hat{\omega}-\omega|^2]\leq \epsilon^2$, how we should choose the $\delta_j$ to minimize the total evolution time required. We first state our result:
\begin{cor}
\label{cor:heisenberg_limit_single_frequency}
    Suppose that $|\omega|<W$ is known in advance and that we have access to a signal $Z_{\delta}(t)$ such that 
    \begin{itemize}
        \item $|Z_{\delta}(t)|=1$,
        \item $|Z_{\delta}(t)-e^{-i(\omega t+f(t))}|\leq \eta$ with probability at least $1-\delta$, where $\sup_t|f(t)|\le C_f <{\pi/3}$,
        \item $2\arcsin \frac{\eta}{2} + C_f\le \frac{\pi}{3}$,
        \item generating $Z_{\delta}(t)$ requires evolution time $C_Z t(\log(\delta^{-1})+1)$,
    \end{itemize}
    then we can produce an estimate $\hat{\omega}$ such that $\mathbb{E}[|\hat{\omega}-\omega|^2]\leq \epsilon^2$, with total evolution time at most $\Or(C_Z \epsilon^{-1})$.
\end{cor}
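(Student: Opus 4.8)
The plan is to apply Theorem~\ref{thm:single_frequency_est_unsimplified} verbatim and treat the $\delta_j$ as free parameters to be tuned. Since that theorem already accounts for an $\epsilon^2/4$ contribution to the mean squared error, reaching the target $\E[|\hat\omega-\omega|^2]\le\epsilon^2$ reduces to enforcing the single linear constraint $\sum_{j=0}^{J-1}E_j^2\delta_j\le\tfrac34\epsilon^2$. I would therefore pose the problem as minimizing the total evolution time \eqref{eq:total_evolution_time_unsimplified}, i.e.\ $\sum_{j=0}^{J-1}2^j(\log(\delta_j^{-1})+1)$, over $\delta_j\in(0,1]$ subject to this accuracy constraint, and then read off the order of the resulting time bound.

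First I would solve the optimization heuristically by a Lagrange-multiplier calculation: setting the gradient of $\sum_j 2^j\log(\delta_j^{-1})+\lambda\sum_j E_j^2\delta_j$ to zero gives $\delta_j\propto 2^j/E_j^2$. Substituting $E_j^2=(4\pi\wt W/3)^2\,2^{-2j}$ for $j\ge1$ turns this into the scaling $\delta_j\propto 2^{3j}$, which is exactly the prescription $\delta_j=\frac{27\epsilon^2}{\pi^2\wt W^2}\frac{2^{3j-6}}{2^J-1}$ recorded in Algorithm~\ref{alg:pruning}. I would then substitute this choice back and verify the accuracy constraint directly: for $j\ge1$ each term reduces to $E_j^2\delta_j=\tfrac34\,\epsilon^2\,2^j/(2^J-1)$, the $j=0$ term is comparably small, and summing the resulting geometric series over $0\le j\le J-1$ shows the normalizing factor $(2^J-1)^{-1}$ is chosen precisely so that $\sum_j E_j^2\delta_j\le\tfrac34\epsilon^2$, establishing \eqref{eq:MSE_bound_unsimplified} at the level $\epsilon^2$.

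The genuinely delicate step, and the one I expect to be the main obstacle, is bounding \eqref{eq:total_evolution_time_unsimplified} with this choice. Writing $\log(\delta_j^{-1})=\log C-3j\log2$ with $C=\frac{64\pi^2\wt W^2(2^J-1)}{27\epsilon^2}$, the sum splits into $\log C\sum_j 2^j$ and $-3\log2\sum_j j2^j$. The naive bound $\log(\delta_j^{-1})=\Or(\log(\epsilon^{-1}))$ yields only $\Or(\epsilon^{-1}\log(\epsilon^{-1}))$, so the point is to show the spurious logarithmic factor cancels. Here I would use the exact identities $\sum_{j=0}^{J-1}2^j=2^J-1$ and $\sum_{j=0}^{J-1}j2^j=(J-2)2^J+2$, together with the fact that $J=\lceil\log_2(4\pi\wt W/(3\epsilon))\rceil$ forces $\log C=3J\log2+\Or(1)$ to leading order. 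The dominant $3J\log2\cdot2^J$ contributions of the two pieces then cancel, leaving a remainder of order $2^J=\Theta(\wt W/\epsilon)$ with an $\Or(1)$ constant independent of $\wt W$ and $\epsilon$. Multiplying by the prefactor $\pi C_Z/\wt W$ makes the $\wt W$ cancel against $2^J=\Theta(\wt W/\epsilon)$, giving the claimed total evolution time $\Or(C_Z\epsilon^{-1})$. The one place requiring genuine care is to track the lower-order terms of $\log C$ and of $\sum_j j2^j$ precisely enough to confirm that only an $\Or(2^J)$ remainder survives the cancellation, rather than merely asserting that it does.
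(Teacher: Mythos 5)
Your proposal is correct and follows essentially the same route as the paper: it reduces to the same optimization problem over the $\delta_j$ subject to $\sum_j E_j^2\delta_j\le\tfrac34\epsilon^2$, arrives at the same optimal choice $\delta_j\propto 2^j/E_j^2$ (the paper derives it via concavity of the logarithm rather than Lagrange multipliers), and then bounds the same total-time expression. The only difference is bookkeeping in the last step: you cancel the leading $3J\log 2\cdot 2^J$ contributions using the exact identities for $\sum_j 2^j$ and $\sum_j j2^j$, whereas the paper bounds each summand via $\delta_j^{-1}\le\tfrac43\, 8^{J-j}$ and sums the resulting series $\sum_j 2^j(J-j)=\Or(2^J)$ --- both yield the same $\Or(2^J)=\Or(\wt{W}/\epsilon)$ remainder, and hence the claimed $\Or(C_Z\epsilon^{-1})$ total evolution time.
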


\begin{proof}
    By \eqref{eq:MSE_bound_unsimplified} and \eqref{eq:total_evolution_time_unsimplified}, we essentially need to solve the following optimization problem to get the optimal $\{\delta_j\}$:
    \begin{equation}
        \label{eq:opt_problem_for_choosing_delta_j}
        \begin{aligned}
            &\underset{\{\delta_j\}}{\text{minimize}}\  \sum_{j=0}^{J-1}2^j \log(\delta_j^{-1}) \\
            &\text{subject to}\  \sum_{j=0}^{J-1}E_j^2 \delta_j \leq \frac{3}{4}\epsilon^2.
        \end{aligned}
    \end{equation}
    This optimization problem can be easily solved using the concavity of the logarithmic function, and the optimal $\delta_j$ is
    \begin{equation}
        \label{eq:delta_j_opt}
        \delta_j = \frac{3\epsilon^2}{4E_j^2}\frac{2^j}{(2^J-1)}.
    \end{equation}
    
    Using this choice of $\delta_j$, we can then compute the total evolution time required through \ref{eq:total_evolution_time_unsimplified}.
    \begin{equation}
    \label{eq:total_evolution_time_decompose}
        \frac{\pi C_Z}{\wt{W}}\sum_{j=0}^{J-1}2^j(\log(\delta_j^{-1})+1) = \underbrace{\frac{\pi C_Z}{\wt{W}}\sum_{j=0}^{J-1}2^j\log(\delta_j^{-1})}_{\mathrm{(I)}} + \underbrace{\frac{\pi C_Z}{\wt{W}}(2^J-1)}_{\mathrm{(II)}}.
    \end{equation}
    For term $\mathrm{(II)}$, we have 
    \begin{equation}
    \label{eq:bounding_II}
        \frac{\pi C_Z}{\wt{W}}(2^J-1)<\pi C_Z \frac{8\pi}{3\epsilon}
    \end{equation}
    by our choice of $J$ given in \ref{eq:choice_of_params_single_frequency_est}.
    For $\mathrm{(I)}$, using our expression for $\delta_j$ and the expression for $E_j$ in \eqref{eq:choice_of_params_single_frequency_est}, we have
    \begin{equation}
    \label{eq:bounding_I}
    \begin{aligned}
        \mathrm{(I)} &= \frac{\pi C_Z}{\wt{W}}\sum_{j=0}^{J-1}2^j\log\left(\frac{4E_j^2}{3\epsilon^2}\frac{2^J-1}{2^j}\right) \\
        &= \frac{\pi C_Z}{\wt{W}}\log\left(\frac{16\pi^2 \wt{W}^2}{3\epsilon^2}(2^J-1)\right) + \frac{\pi C_Z}{\wt{W}}\sum_{j=1}^{J-1}2^j\log\left(\frac{64\pi^2 \wt{W}^2}{27\epsilon^2}\frac{2^J-1}{2^j}\frac{1}{4^j}\right) \\
        &< \frac{\pi C_Z}{\wt{W}}\log\left(\frac{64\pi^3 \wt{W}^3}{9\epsilon^3}\right) + \frac{\pi C_Z}{\wt{W}}\sum_{j=1}^{J-1}2^j\log\left(\frac{4}{3}8^{J-j}\right) \\
        &= \frac{\pi C_Z}{\wt{W}}\log\left(\frac{64\pi^3 \wt{W}^3}{9\epsilon^3}\right) + \frac{\pi C_Z}{\wt{W}}\log\left(\frac{4}{3}\right)(2^J-2) + \frac{\pi C_Z}{\wt{W}}\log(8)(2^{J+2}-2J-2) \\
        &\leq \Or(\epsilon) + \Or(C_Z \epsilon^{-1}).
    \end{aligned}
    \end{equation}
    In the last line, we have used the fact that $\epsilon\leq W$ (as otherwise we can simply estimate $\omega$ by $0$) to bound the first term on the second-to-last line. Combining \eqref{eq:total_evolution_time_decompose}, \eqref{eq:bounding_II}, and \eqref{eq:bounding_I}, we can see that the total evolution time of the entire procedure is $\Or(C_Z \epsilon^{-1})$.
\end{proof}

\section{Learning an anharmonic oscillator}
\label{sec:learning_an_anharmonic_oscillator_appendix}

The basic building block of our algorithm is a method to learn a single anharmonic oscillator of the form
\begin{equation}
    H_{\mathrm{AHO}} = \omega b^{\dagger}b + \frac{\xi}{2}n(n-1),
\end{equation}
where $n=b^{\dagger}b$.
We will then outline the experiments we run to learn the coefficients $\omega$ and $\xi$ from this Hamiltonian. We first start with a coherent state
\begin{equation}
    \ket{\alpha} = e^{-|\alpha|^2/2}\sum_{k=0}^{\infty}\frac{\alpha^k}{\sqrt{k!}}\ket{k}.
\end{equation}
We then let the system evolve under the Hamiltonian $H_{\mathrm{AHO}}$ for time $t$, and obtain the quantum state
\begin{equation}
    e^{-i H_{\mathrm{AHO}} t}\ket{\alpha} = e^{-|\alpha|^2/2}\sum_{k=0}^{\infty}\frac{\alpha^k}{\sqrt{k!}}e^{-i\omega k t-i\frac{\xi}{2}k(k-1)t}\ket{k}.
\end{equation}
In the end, we perform POVM measurement in the eigenbasis of either $X=(b+b^{\dagger})/\sqrt{2}$ or $P=i(b^{\dagger}-b)/\sqrt{2}$, and by taking average we obtain $\braket{X}_{\alpha,t}$ and $\braket{P}_{\alpha,t}$, where $\braket{\cdot}_{\alpha,t}$ means taking expectation with respect to the state $e^{-iH_{\mathrm{AHO}} t}\ket{\alpha}$. With these, we can then obtain the expectation value of $b$ through
\begin{equation}
    \braket{b}_{\alpha,t}=\frac{1}{\sqrt{2}}\left(\braket{X}_{\alpha,t} + i\braket{P}_{\alpha,t}\right).
\end{equation}
The expectation values $\braket{b}_{\alpha,t}$ for a certain set of $\alpha$ and $t$ will enable us to estimate $\omega$ and $\xi$, and we will demonstrate this below. First we can compute $b e^{-i H_{\mathrm{AHO}} t}\ket{\alpha}$ to be 
\begin{equation}
    \begin{aligned}
    be^{-iH_{\mathrm{AHO}}t}\ket{\alpha} &= e^{-|\alpha|^2/2}\sum_{k=1}^{\infty}\frac{\alpha^k e^{-i\omega k t}e^{-i\frac{\xi}{2} k(k-1)t}}{\sqrt{(k-1)!}}\ket{k-1} \\
    &= e^{-|\alpha|^2/2}\sum_{k=0}^{\infty}\frac{\alpha^{k+1} e^{-i\omega (k+1) t}e^{-i\frac{\xi}{2} k(k+1)t}}{\sqrt{k!}}\ket{k}.
    \end{aligned}
\end{equation}
This yields a closed-form expression for $\braket{b}_{\alpha,t}$:
\begin{equation}
\begin{aligned}
    \braket{b}_{\alpha,t} &= e^{-|\alpha|^2}\sum_k \frac{\alpha |\alpha|^{2k} e^{-i\omega t}e^{-i\xi kt}}{k!} \\
    &= \alpha e^{-|\alpha|^2} e^{-i\omega t} e^{|\alpha|^2 e^{-i\xi t}}.
\end{aligned}
\end{equation}
Now we are ready to estimate $\omega$ and $\xi$ with the help of Corollary~\ref{cor:heisenberg_limit_single_frequency}. To estimate $\omega$, we define 
\[
Z(t) = \frac{\braket{b}_{\alpha,t}}{\left|\braket{b}_{\alpha,t}\right|} = e^{-i(\omega t + |\alpha|^2\sin(\xi t))},
\]
then $Z(t) = e^{-i(\omega t + f(t))}$, where $f(t) = |\alpha|^2\sin(\xi t)$. Therefore, $\sup_{t}|f(t)|\leq|\alpha|^2$. The exact value of $Z(t)$ is, however, inaccessible in practice, and we need to find an approximation $Z_\delta(t)$ such that $|Z_\delta(t)-Z(t)|\leq \eta$ with probability at least $\delta$ if we want to utilize Corollary~\ref{cor:heisenberg_limit_single_frequency}. In the following, we decompose the approximation error into three parts and analyze them separately.

\textbf{Truncation error.}
The first part of the approximation error comes from the truncation of the observables. In our protocol, we truncate the observables up to a threshold $M$, which means rather than estimating the $\braket{X}_{\alpha,t}$ and $\braket{P}_{\alpha,t}$ we estimate $\braket{X\one{|X|\leq M}}_{\alpha,t}$ and $\braket{P\one{|P|\leq M}}_{\alpha,t}$. 
Here $\one{|X|\leq M}$ and $\one{|P|\leq M}$ are defined to be
\begin{equation*}
    \one{|X|\leq M} = \int_{-M}^M\ket{x}\bra{x}\dd x,\quad \one{|P|\leq M} = \int_{-M}^M\ket{p}\bra{p}\dd p.
\end{equation*}
This is necessary for the robustness of our protocol. With the original unbounded observables $X$ and $P$, any small error in the quantum state can potentially be infinitely magnified in the expectation value. The use of bounded observables will ensure that this does not happen. 

In the following, we will ensure that the error introduced by this truncation is acceptable for our protocol.
From Chebyshev's inequality, one has
\begin{equation}
    \mathbb{P}(|X|\ge M)\le \frac{\braket{X^2}_{\alpha,t}}{M^2} \le \frac{2\braket{b^\dagger b}_{\alpha, t}+1}{M^2} = \frac{2|\alpha|^2+1}{M^2},
\end{equation}
where we have used the fact that $\braket{b^\dagger b}_{\alpha, t}=|\alpha|^2$. Then, by Cauchy-Schwarz inequality, 
\begin{equation}
\begin{aligned}
\left|\braket{X}_{\alpha,t}-\braket{X\one{|X|\le M}}_{\alpha,t}\right|&= |\braket{X\one{|X|>M}}_{\alpha,t}|\le \sqrt{\braket{X^2}_{\alpha,t}}\sqrt{\braket{\one{|X|>M}^2}_{\alpha,t}}\\
&\le \sqrt{2\braket{b^\dagger b}_{\alpha,t}+1}\sqrt{\mathbb{P}(|X|\ge M)}\le \frac{2|\alpha|^2+1}{{M}}.
\end{aligned}
\end{equation}
Similarly, one has
\begin{equation}
    \mathbb{P}(|P|\ge M) \le \frac{\braket{P^2}_{\alpha,t}}{M^2} \le \frac{2\braket{b^\dagger b}_{\alpha,t}+1}{M^2} = \frac{2|\alpha|^2+1}{M^2},
\end{equation}
and
\begin{equation}
\begin{aligned}
\left|\braket{P}_{\alpha,t}-\braket{P\one{|P|\le M}}_{\alpha,t}\right|\le\frac{2|\alpha|^2+1}{{M}}.
\end{aligned}
\end{equation}
Combining the error bounds for $X$ and $P$ truncations, we will have an error bound for the truncated $b$ operator.
Let 
\begin{equation}
    Z_{M}(t)=\frac{1}{\sqrt{2}}\left(\braket{X\one{|X|\le M}}_{\alpha,t} + i\braket{P\one{|P|\le M}}_{\alpha,t}\right),
\end{equation}
then
\begin{equation}\label{eq:tce}
\begin{aligned}
\abs{Z_{M}(t)-\braket{b}_{\alpha,t}} &= \sqrt{\abs{\Re Z_{M}(t) - \Re \braket{b}_{\alpha,t}}^2 + \abs{\Im Z_{M}(t) - \Im \braket{b}_{\alpha,t}}^2} \\
&= \sqrt{\half\left(\abs{\braket{X}_{\alpha,t}-\braket{X\one{|X|\le M}}_{\alpha,t}}^2 + \abs{\braket{P}_{\alpha,t}-\braket{P\one{|P|\le M}}_{\alpha,t}}^2\right)}\\
&\le \frac{2|\alpha|^2+1}{{M}}.
\end{aligned}
\end{equation}

\textbf{Simulation error.}
In practice, the final state we obtained is different from the ideal state $e^{-iH_{\mathrm{AHO}} t}\ket{\alpha}$. 
This is because in the multi-mode situation, $H_{\mathrm{AHO}}$ is the Hamiltonian of the effective dynamics, which differs from the actual dynamics by a small error. In this sense, we only \emph{simulate} the effective dynamics, thus calling this error \emph{the simulation error}.
We denote the expectation with respect to the real final state obtained by $\braket{\cdot}_{\alpha, t, r}$, where $r$ stands for the parameters used in the simulation. More precisely, as will be explained in Section~\ref{sec:single_mode_coeff_appendix}, and in particular \eqref{eq:time_evolution_transformation_appendix}, $r$ is the number of random unitaries that we insert during the time evolution. In Section~\ref{sec:deviation_from_effective_dynamics_appendix}, we will show that for any given $\eta_0>0$, there exists a choice of $r$ such that 
\begin{equation}
    \abs{\braket{O}_{\alpha, t, r}-\braket{O}_{\alpha, t}} \le \norm{O}\eta_0, 
\end{equation}
for any bounded observable $O$. In particular, for any given $\eta_0>0$, there is a choice of $r$ such that
\begin{equation}
    \abs{\braket{X\one{|X|\le M}}_{\alpha, t, r}-\braket{X\one{|X|\le M}}_{\alpha, t}} \le M\eta_0, ~\abs{\braket{P\one{|P|\le M}}_{\alpha, t, r}-\braket{P\one{|P|\le M}}_{\alpha, t}} \le M\eta_0.
\end{equation}
Define
\begin{equation}
    Z_{M,r}(t)=\frac{1}{\sqrt{2}}\left(\braket{X\one{|X|\le M}}_{\alpha,t,r} + i\braket{P\one{|P|\le M}}_{\alpha,t,r}\right),
\end{equation}
then 
\begin{equation}
\begin{aligned}
&\abs{Z_{M,r}(t)-Z_{M}(t)} \\
&= \sqrt{\half\left(\abs{\braket{X\one{|X|\le M}}_{\alpha,t,r}-\braket{X\one{|X|\le M}}_{\alpha,t}}^2 + \abs{\braket{P\one{|P|\le M}}_{\alpha,t,r}-\braket{P\one{|P|\le M}}_{\alpha,t}}^2\right)}\\
&\le M\eta_0.
\end{aligned}
\end{equation}

\textbf{Statistical error.} In practice, homodyne measurement generates samples corresponding to the quadrature operator $X$. By discarding the samples with norm larger than $M$, we obtain samples $\hat{x}_1,\hat{x}_2,\cdots,\hat{x}_L$ corresponding to $X\one{|X|\le M}$. We then approximate $\braket{X\one{|X|\le M}}_{\alpha, t, r}$ through the average $\bar{x}=(\hat{x}_1+\hat{x}_2+\cdots+\hat{x}_L)/L$. Similarly, we can generate $\hat{p}_1,\hat{p}_2,\cdots,\hat{p}_L$ corresponding to $P\one{|P|\le M}$, and use $\bar{p} = (\hat{p}_1+\hat{p}_2+\cdots+\hat{p}_L)/L$ to approximate $\braket{P\one{|P|\le M}}_{\alpha, t, r}$. It is clear that $\hat{x}$ and $\hat{p}$ are unbiased estimates for $\braket{X\one{|X|\le M}}_{\alpha, t, r}$ and $\braket{P\one{|P|\le M}}_{\alpha, t, r}$. Define
\begin{equation}
    \bar{Z} = \frac{1}{\sqrt{2}}(\bar{x}+i\bar{p}),
\end{equation}
then 
\begin{equation}
\begin{aligned}
&\abs{\bar{Z}-Z_{M,r}(t)}= \sqrt{\half\left(\abs{\bar{x}-\braket{X\one{|X|\le M}}_{\alpha,t,r}}^2 + \abs{\bar{p}-\braket{P\one{|P|\le M}}_{\alpha,t,r}}^2\right)}\\
&\le \max\left\{\abs{\bar{x}-\braket{X\one{|X|\le M}}_{\alpha,t,r}}, \abs{\bar{p}-\braket{P\one{|P|\le M}}_{\alpha,t,r}}\right\}.
\end{aligned}
\end{equation}
Thus by the union bound and Hoeffding's inequality, we have
\begin{equation}
\begin{aligned}
\mathbb{P}\left(|\bar{Z}-Z_{M,r}(t)|\ge {\eta_1}\right)&\le \mathbb{P}\left(\abs{\bar{x}-\braket{X\one{|X|\le M}}_{\alpha,t,r}}\ge \eta_1\right) + \mathbb{P}\left(\abs{\bar{p}-\braket{P\one{|P|\le M}}_{\alpha,t,r}}\ge \eta_1\right) \\
&\le 2e^{-\frac{L\eta_1^2}{2M^2}}+2e^{-\frac{L\eta_1^2}{2M^2}}= 4e^{-\frac{L\eta_1^2}{2M^2}}.
\end{aligned}
\end{equation}
Putting the three types of error together, we have
\begin{equation}
\begin{aligned}\label{eq:approxerr}
\abs{\bar{Z}-\braket{b}_{\alpha,t}}&\le \abs{\bar{Z}-Z_{M,r}(t)}+\abs{Z_{M,r}(t)-Z_{M}(t)}+\abs{Z_{M}(t)-\braket{b}_{\alpha,t}}\\
&\le \eta_1+M\eta_0+\frac{2|\alpha|^2+1}{{M}},
\end{aligned}
\end{equation}
with probability at least $1-4e^{-\frac{L\eta_1^2}{2M^2}}$. Define
\begin{equation}
Z_\delta(t) = \frac{\bar{Z}}{\abs{\bar{Z}}},
\end{equation}
then
\begin{equation}
\begin{aligned}
\abs{Z_\delta(t)-Z(t)} &= \abs{\frac{\bar{Z}}{\abs{\bar{Z}}}-\frac{\braket{b}_{\alpha,t}}{\abs{\braket{b}_{\alpha,t}}}}\le \frac{2\abs{\bar{Z}-\braket{b}_{\alpha,t}}}{\abs{\braket{b}_{\alpha,t}}}\\
&=\frac{2\abs{\bar{Z}-\braket{b}_{\alpha,t}}}{|\alpha|e^{|\alpha|^2(\cos(\xi t)-1)}}\le 2|\alpha|^{-1}e^{2|\alpha|^2}\abs{\bar{Z}-\braket{b}_{\alpha,t}}.\\
\end{aligned}
\end{equation}
Hence,
\begin{equation}
\abs{Z_\delta(t)-Z(t)}\le 2|\alpha|^{-1}e^{2|\alpha|^2}\left(\eta_1+M\eta_0+\frac{2|\alpha|^2+1}{{M}}\right),
\end{equation}
with probability at least $1-4e^{-\frac{L\eta_1^2}{2M^2}}$. In order for the condition $2\arcsin \frac{\eta}{2} + C_f\le \frac{\pi}{3}$ in Theorem~\ref{thm:single_frequency_est_unsimplified} to hold, we need
\begin{equation}\label{eq:approxbound}
    2\arcsin \left(|\alpha|^{-1}e^{2|\alpha|^2}\left(\eta_1+M\eta_0+\frac{2|\alpha|^2+1}{{M}}\right)\right) + |\alpha|^2 \le \frac{\pi}{3}.
\end{equation}\label{eq:probbound}
In order for $1-4e^{-\frac{L\eta_1^2}{2M^2}}\ge1-\delta$ to hold, we need
\begin{equation}
    L\ge \frac{2M^2}{\eta_1^2}\log\frac{4}{\delta}.
\end{equation}
In conclusion, we have constructed a signal $Z_\delta(t)$ to estimate the parameter $\omega$ that satisfies the conditions required by Corollary~\ref{cor:heisenberg_limit_single_frequency}. 
\begin{lem}\label{lem:zomega}
Define $Z_\delta(t) = \bar{Z}/\abs{\bar{Z}}$, where $\bar{Z} = \frac{1}{\sqrt{2}}(\bar{x}+i\bar{p})$, and $(\bar{x}, \bar{p})$ are the average values computed from $L$ measurement results each for $\braket{X\one{|X|\le M}}_{\alpha, t, r}$ and $\braket{P\one{|P|\le M}}_{\alpha, t, r}$, respectively. Here $\braket{\cdot}_{\alpha, t, r}$ denotes the expectation with respect to the real final state obtained by a simulation using $r$ randomly inserted unitaries, which is an approximation of the state $e^{-iH_{\mathrm{AHO}} t}\ket{\alpha}$ satisfying $\abs{\braket{O}_{\alpha, t, r}-\braket{O}_{\alpha, t}} \le \norm{O}\eta_0$ for any bounded operator $O$. Then $Z_\delta(t)$ satisfies the conditions of Corollary~\ref{cor:heisenberg_limit_single_frequency} for the estimation of $\omega$ if
\begin{equation}\label{eq:params}
\begin{aligned}
&|\alpha|^2<\frac{\pi}{3},~ M > \frac{e^{2|\alpha|^2}(2|\alpha|^2+1)}{|\alpha|\sin(\pi/6-|\alpha|^2/2)}, \\
&\eta_0<\frac{1}{M}\left(|\alpha|e^{-2|\alpha|^2}\sin\left(\frac{\pi}{6}-\frac{|\alpha|^2}{2}\right)-\frac{2|\alpha|^2+1}{{M}}\right),\\
&\eta_1\le |\alpha|e^{-2|\alpha|^2}\sin\left(\frac{\pi}{6}-\frac{|\alpha|^2}{2}\right)-\frac{2|\alpha|^2+1}{{M}} -M\eta_0,\\
&L\ge \frac{2M^2}{\eta_1^2}\log\frac{4}{\delta}.
\end{aligned}
\end{equation}
As a result, $\alpha$, $M$, $\eta_0$ and $\eta_1$ can be chosen as $\mathcal{O}(1)$ constants and the total runtime needed in producing $Z_\delta(t)$ is $\mathcal{O}(t(\log(1/\delta)+1))$.
\end{lem}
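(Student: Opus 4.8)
The plan is to verify the four hypotheses of Corollary~\ref{cor:heisenberg_limit_single_frequency} for the signal $Z_\delta(t)=\bar Z/|\bar Z|$ one at a time, and then to check that the constraint system \eqref{eq:params} admits simultaneous $\mathcal{O}(1)$ solutions. The unit-modulus condition $|Z_\delta(t)|=1$ is immediate from the definition. For the phase model, recall that the exact signal is $Z(t)=\braket{b}_{\alpha,t}/|\braket{b}_{\alpha,t}|=e^{-i(\omega t+f(t))}$ with $f(t)=|\alpha|^2\sin(\xi t)$, so $C_f=\sup_t|f(t)|\le|\alpha|^2$, and the first constraint $|\alpha|^2<\pi/3$ secures $C_f<\pi/3$ as required.

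First I would assemble the deviation $|\bar Z-\braket{b}_{\alpha,t}|$ from the three contributions already bounded above: the truncation error $(2|\alpha|^2+1)/M$ from \eqref{eq:tce}, the simulation error $M\eta_0$ (using the hypothesis $|\braket{O}_{\alpha,t,r}-\braket{O}_{\alpha,t}|\le\|O\|\eta_0$), and the statistical error $\eta_1$, which by Hoeffding and a union bound holds with probability at least $1-4e^{-L\eta_1^2/(2M^2)}$; the triangle inequality gives \eqref{eq:approxerr}. I would then push this through the normalization map via $|Z_\delta(t)-Z(t)|\le 2|\braket{b}_{\alpha,t}|^{-1}|\bar Z-\braket{b}_{\alpha,t}|$ together with the uniform lower bound $|\braket{b}_{\alpha,t}|=|\alpha|e^{|\alpha|^2(\cos(\xi t)-1)}\ge|\alpha|e^{-2|\alpha|^2}$, obtaining an error amplitude $\eta=2|\alpha|^{-1}e^{2|\alpha|^2}(\eta_1+M\eta_0+(2|\alpha|^2+1)/M)$ that is uniform in $t$ and holds with probability at least $1-4e^{-L\eta_1^2/(2M^2)}$.

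The crux is the third hypothesis $2\arcsin(\eta/2)+C_f\le\pi/3$. Since $C_f\le|\alpha|^2$, it suffices to have $\eta\le 2\sin(\pi/6-|\alpha|^2/2)$; substituting the expression for $\eta$ and rearranging reproduces exactly the $\eta_1$ bound stated in \eqref{eq:params}, whose right-hand side is positive precisely because the lower bound imposed on $M$ forces the truncation term $(2|\alpha|^2+1)/M$ below the threshold $|\alpha|e^{-2|\alpha|^2}\sin(\pi/6-|\alpha|^2/2)$. I would then pin down the failure probability by taking $L\ge(2M^2/\eta_1^2)\log(4/\delta)$, which makes $4e^{-L\eta_1^2/(2M^2)}\le\delta$ and so certifies the second hypothesis at confidence $1-\delta$. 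The fourth hypothesis and the runtime claim then follow by resource counting: one needs $L=\mathcal{O}(\log(1/\delta)+1)$ measurement repetitions, each an experiment of total evolution time $\mathcal{O}(t)$—crucially, slicing the evolution into $r$ pieces to control $\eta_0$ leaves the total evolution time equal to $t$—so producing $Z_\delta(t)$ costs evolution time $\mathcal{O}(t(\log(1/\delta)+1))$.

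The main obstacle is not any individual estimate, since the truncation, simulation, and statistical bounds are all elementary concentration and norm inequalities, but rather confirming the mutual consistency of \eqref{eq:params}. I would make this explicit by observing that the threshold $|\alpha|e^{-2|\alpha|^2}\sin(\pi/6-|\alpha|^2/2)$ is continuous and strictly positive on $|\alpha|\in(0,\sqrt{\pi/3})$; hence fixing any sufficiently small $|\alpha|$ leaves room to pick $M$ large enough, and then a positive budget remains to split between $M\eta_0$ and $\eta_1$, yielding valid $\mathcal{O}(1)$ values of $\alpha$, $M$, $\eta_0$, and $\eta_1$ that are independent of both $\delta$ and $t$.
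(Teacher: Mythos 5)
Your proposal is correct and follows essentially the same route as the paper: the same three-part error decomposition (truncation via Chebyshev/Cauchy--Schwarz, simulation error $M\eta_0$, statistical error via Hoeffding and a union bound), the same normalization step using the lower bound $\left|\braket{b}_{\alpha,t}\right|\ge|\alpha|e^{-2|\alpha|^2}$, and the same rearrangement of $2\arcsin(\eta/2)+C_f\le\pi/3$ into the constraints of \eqref{eq:params}, with the identical $L$ and runtime accounting.
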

\begin{rem}
When choosing the parameters in practice, one can follow the order in \eqref{eq:params}, i.e., first decide the value of $\alpha$, then choose $M$, $\eta_0$, $\eta_1$ and $L$ accordingly. 
\end{rem}

Next, we will build the signal $Z_\delta(t)$ for the estimation of $\xi$ with the help of the result above. In particular, when \eqref{eq:approxerr} and \eqref{eq:approxbound} hold, one can deduce that
\begin{equation}\label{eq:priorbound}
    |\bar{Z}-\braket{b}_{\alpha, t}| \le |\alpha|e^{-2|\alpha|^2}\sin\left(\frac{\pi}{6}-\frac{|\alpha|^2}{2}\right)\le \half|\alpha|e^{-2|\alpha|^2}\le\half|\braket{b}_{\alpha, t}|.
\end{equation}
We first observe that $\cos(\xi t)$ can be obtained by
\begin{equation}
    \cos(\xi t) = \frac{1}{|\alpha|^2}\log\frac{\abs{\braket{b}_{\alpha, t}}}{\alpha} + 1.
\end{equation}
Therefore, when when \eqref{eq:approxerr} and \eqref{eq:approxbound} hold, the error in the estimation of $\cos(\xi t)$ caused by using $\bar{Z}$ instead of $\braket{b}_{\alpha, t}$ is 
\begin{equation}\label{eq:errcos}
\begin{aligned}
&\abs{\left(\frac{1}{|\alpha|^2}\log\frac{\abs{\braket{b}_{\alpha, t}}}{\alpha} + 1\right) -
\left(\frac{1}{|\alpha|^2}\log\frac{\abs{\bar{Z}}}{\alpha}+1\right)}\\
=&~\frac{1}{|\alpha|^2}\abs{\log\frac{\abs{\bar{Z}}}{\abs{\braket{b}_{\alpha, t}}}}=\frac{1}{|\alpha|^2}\abs{\log\left(1+\frac{\abs{\bar{Z}}-\abs{\braket{b}_{\alpha, t}}}{\abs{\braket{b}_{\alpha, t}}}\right)} \\
\le&~\frac{2\log2}{|\alpha|^2}\abs{\frac{\abs{\bar{Z}}-\abs{\braket{b}_{\alpha, t}}}{\abs{\braket{b}_{\alpha, t}}}}\\
\le&~2\log2|\alpha|^{-3}e^{2|\alpha|^2}\left(\eta_1+M\eta_0+\frac{2|\alpha|^2+1}{{M}}\right),
\end{aligned}
\end{equation}
where we have used the concavity of $\log x$ and \eqref{eq:priorbound} in the third line. For estimating $\sin(\xi t)$, we use two different values for $\alpha$. The ratio between $\braket{b}_{\alpha_1,t}$ and $\braket{b}_{\alpha_2,t}$ is
\begin{equation}
    \frac{\braket{b}_{\alpha_1,t}}{\braket{b}_{\alpha_2,t}} = \frac{\alpha_1}{\alpha_2} e^{(|\alpha_2|^2-|\alpha_1|^2)(1-e^{-i\xi t})}. 
\end{equation}
Let $Z_{\alpha_1, \alpha_2} = \frac{\braket{b}_{\alpha_1,t}/\braket{b}_{\alpha_2,t}}{\abs{\braket{b}_{\alpha_1,t}/\braket{b}_{\alpha_2,t}}}$ 
and $\beta = \abs{\alpha_2}^2-\abs{\alpha_1}^2$. 
Assume that $\abs{\beta}<\frac{\pi}{2}$, then
\begin{equation}
\begin{aligned}
    \sin(\xi t) &= \frac{1}{\beta}\arcsin\left(\Im Z_{\alpha_1, \alpha_2}\right),
\end{aligned}
\end{equation}
Now we analyze the error in the estimate of $\sin(\xi t)$ caused by approximation. We assume that \eqref{eq:approxbound} holds for both $\alpha_1$ and $\alpha_2$, and we condition on the event that 
\begin{equation}
\begin{aligned}
    &\abs{\bar{Z}_1-\braket{b}_{\alpha_1, t}} \le \eta_1+M\eta_0+\frac{2|\alpha_1|^2+1}{{M}},\\
    &\abs{\bar{Z}_2-\braket{b}_{\alpha_2, t}} \le \eta_1+M\eta_0+\frac{2|\alpha_2|^2+1}{{M}}.
\end{aligned} 
\end{equation}
Then
\begin{equation}
\begin{aligned}
&\abs{\Im \frac{\bar{Z}_1/\bar{Z}_2}{\abs{\bar{Z}_1/\bar{Z}_2}}-\Im \frac{\braket{b}_{\alpha_1,t}/\braket{b}_{\alpha_2,t}}{\abs{\braket{b}_{\alpha_1,t}/\braket{b}_{\alpha_2,t}}}}\\
\le&~\abs{ \frac{\bar{Z}_1/\bar{Z}_2}{\abs{\bar{Z}_1/\bar{Z}_2}}- \frac{\braket{b}_{\alpha_1,t}/\braket{b}_{\alpha_2,t}}{\abs{\braket{b}_{\alpha_1,t}/\braket{b}_{\alpha_2,t}}}}\\
\le&~\frac{2\abs{\bar{Z}_1/\bar{Z}_2-\braket{b}_{\alpha_1,t}/\braket{b}_{\alpha_2,t}}}{\abs{\braket{b}_{\alpha_1,t}/\braket{b}_{\alpha_2,t}}}\\
\le&~2\frac{\abs{\braket{b}_{\alpha_2, t}}}{\abs{\braket{b}_{\alpha_1, t}}}\left[\frac{\abs{\bar{Z}_1-\braket{b}_{\alpha_1,t}}}{\abs{\bar{Z}_2}}+\frac{\abs{\braket{b}_{\alpha_1,t}}}{\abs{\braket{b}_{\alpha_2,t}}}\frac{\abs{\bar{Z}_2-\braket{b}_{\alpha_2,t}}}{\abs{\bar{Z}_2}}\right]\\
\le&~4\left[\frac{\abs{\bar{Z}_1-\braket{b}_{\alpha_1,t}}}{\abs{\braket{b}_{\alpha_1,t}}}+\frac{\abs{\bar{Z}_2-\braket{b}_{\alpha_2,t}}}{\abs{\braket{b}_{\alpha_2,t}}}\right]\\
\le&~4|\alpha_1|^{-1}e^{2|\alpha_1|^2}\left(\eta_1+M\eta_0+\frac{2|\alpha_1|^2+1}{{M}}\right)+4|\alpha_2|^{-1}e^{2|\alpha_2|^2}\left(\eta_1+M\eta_0+\frac{2|\alpha_2|^2+1}{{M}}\right).
\end{aligned}
\end{equation}
Here we have used the fact that $\abs{\bar{Z}}\ge\half\abs{\braket{b}_{\alpha_,t}}$ in the fifth line, which can be deduced from \eqref{eq:approxbound}. Now, if we further assume that $\abs{\beta}\le \frac{\pi}{3}$, then since the function $\arcsin$ is $2$-Lipschitz on $[-\sin\frac{\pi}{3}, \sin\frac{\pi}{3}]$, we have
\begin{equation}\label{eq:errsin}
\begin{aligned}
&\abs{\frac{1}{\beta}\arcsin\left(\Im \frac{\bar{Z}_1/\bar{Z}_2}{\abs{\bar{Z}_1/\bar{Z}_2}}\right)-\frac{1}{\beta}\arcsin(\Im Z_{\alpha_1, \alpha_2})}\\
\le&~\frac{2}{\beta}\abs{\Im \frac{\bar{Z}_1/\bar{Z}_2}{\abs{\bar{Z}_1/\bar{Z}_2}}-\Im Z_{\alpha_1, \alpha_2}}\\
\le&~\frac{8}{\beta}\left(|\alpha_1|^{-1}e^{2|\alpha_1|^2}\left(\eta_1+M\eta_0+\frac{2|\alpha_1|^2+1}{{M}}\right)+|\alpha_2|^{-1}e^{2|\alpha_2|^2}\left(\eta_1+M\eta_0+\frac{2|\alpha_2|^2+1}{{M}}\right)\right)
\end{aligned}
\end{equation}
Combining \eqref{eq:errcos} and \eqref{eq:errsin}, we have
\begin{equation}
\begin{aligned}
&\abs{e^{i\xi t}-\frac{\hat{c}+i\hat{s}}{\abs{\hat{c}+i\hat{s}}}} \le 4\log2|\alpha|^{-3}e^{2|\alpha|^2}\left(\eta_1+M\eta_0+\frac{2|\alpha|^2+1}{{M}}\right)\\
&+\frac{16}{\beta}\left(|\alpha_1|^{-1}e^{2|\alpha_1|^2}\left(\eta_1+M\eta_0+\frac{2|\alpha_1|^2+1}{{M}}\right)+|\alpha_2|^{-1}e^{2|\alpha_2|^2}\left(\eta_1+M\eta_0+\frac{2|\alpha_2|^2+1}{{M}}\right)\right),
\end{aligned}
\end{equation}
where $\hat{c} = \frac{1}{|\alpha|^2}\log\frac{\abs{\bar{Z}}}{\alpha} + 1$ and $\hat{s} = \frac{1}{\beta}\arcsin\left(\Im \frac{\bar{Z}_1/\bar{Z}_2}{\abs{\bar{Z}_1/\bar{Z}_2}}\right)$, 
and the condition in Corollary~\ref{cor:heisenberg_limit_single_frequency} reads
\begin{equation}
\begin{aligned}
&1\ge4\log2|\alpha|^{-3}e^{2|\alpha|^2}\left(\eta_1+M\eta_0+\frac{2|\alpha|^2+1}{{M}}\right)\\
&+\frac{16}{\beta}\left(|\alpha_1|^{-1}e^{2|\alpha_1|^2}\left(\eta_1+M\eta_0+\frac{2|\alpha_1|^2+1}{{M}}\right)+|\alpha_2|^{-1}e^{2|\alpha_2|^2}\left(\eta_1+M\eta_0+\frac{2|\alpha_2|^2+1}{{M}}\right)\right).
\end{aligned}
\end{equation}
In particular, we can take $\alpha=\alpha_1$ and obtain the following result.
\begin{lem}\label{lem:zxi}
Define $Z_\delta(t) = \frac{\hat{c}+i\hat{s}}{\abs{\hat{c}+i\hat{s}}}$, where $\hat{c} = \frac{1}{|\alpha_1|^2}\log\frac{\abs{\bar{Z}_1}}{\alpha_1} + 1$, $\hat{s} = \frac{1}{\beta}\arcsin\left(\Im \frac{\bar{Z}_1/\bar{Z}_2}{\abs{\bar{Z}_1/\bar{Z}_2}}\right)$, and $(\bar{Z}_1, \bar{Z}_2)$ are defined in the same way as in Lemma~\ref{lem:zomega} for $\alpha_1$ and $\alpha_2$, respectively. Then $Z_\delta(t)$ satisfies the conditions of Corollary~\ref{cor:heisenberg_limit_single_frequency} for the estimation of $\xi$ if
\begin{equation}\label{eq:paramsxi}
\begin{aligned}
&|\alpha_1|^2<\frac{\pi}{3},~|\alpha_2|^2<\frac{\pi}{3},~\beta := \abs{|\alpha_1|^2-|\alpha_2|^2}<\frac{\pi}{2},\\
&M > \left(4\log2|\alpha_1|^{-3}+\frac{16}{\beta}|\alpha_1|^{-1}\right)e^{2|\alpha_1|^2}\left({2|\alpha_1|^2+1}\right) + \frac{16}{\beta}|\alpha_2|^{-1}e^{2|\alpha_2|^2}\left({2|\alpha_2|^2+1}\right), \\
&\eta_0<\frac{M-\left(4\log2|\alpha_1|^{-3}+\frac{16}{\beta}|\alpha_1|^{-1}\right)e^{2|\alpha_1|^2}\left({2|\alpha_1|^2+1}\right) - \frac{16}{\beta}|\alpha_2|^{-1}e^{2|\alpha_2|^2}\left({2|\alpha_2|^2+1}\right)}{M^2\left(\left(4\log2|\alpha_1|^{-3}+\frac{16}{\beta}|\alpha_1|^{-1}\right)e^{2|\alpha_1|^2}+\frac{16}{\beta}|\alpha_2|^{-1}e^{2|\alpha_2|^2}\right)},\\
&\eta_1\le \frac{M-\left(4\log2|\alpha_1|^{-3}+\frac{16}{\beta}|\alpha_1|^{-1}\right)e^{2|\alpha_1|^2}\left({2|\alpha_1|^2+1+M^2\eta_0}\right) - \frac{16}{\beta}|\alpha_2|^{-1}e^{2|\alpha_2|^2}\left({2|\alpha_2|^2+1+M^2\eta_0}\right)}{M\left(\left(4\log2|\alpha_1|^{-3}+\frac{16}{\beta}|\alpha_1|^{-1}\right)e^{2|\alpha_1|^2}+\frac{16}{\beta}|\alpha_2|^{-1}e^{2|\alpha_2|^2}\right)},\\
&L\ge \frac{2M^2}{\eta_1^2}\log\frac{8}{\delta}.
\end{aligned}
\end{equation}
As a result, $\alpha_1$, $\alpha_2$, $M$, $\eta_0$ and $\eta_1$ can be chosen as $\mathcal{O}(1)$ constants and the total runtime needed in producing $Z_\delta(t)$ is $\mathcal{O}(t(\log(1/\delta)+1))$.
\end{lem}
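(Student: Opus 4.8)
The plan is to treat this as an assembly lemma: the genuinely analytic content (the three-part error decomposition and the Lipschitz estimates) has already been carried out in the derivation preceding the statement, so the proof amounts to verifying that the constructed signal $Z_\delta(t)$ meets each of the four hypotheses of Corollary~\ref{cor:heisenberg_limit_single_frequency} under the parameter budget \eqref{eq:paramsxi}. The first observation is that the frequency we are now estimating is $\xi$, with the identification $\omega\mapsto -\xi$ and $f(t)\equiv 0$ (so $C_f=0$), because the noiseless signal $(\cos(\xi t)+i\sin(\xi t))/|\cdots|$ equals $e^{i\xi t}=e^{-i(-\xi)t}$. The unit-modulus condition $|Z_\delta(t)|=1$ is then immediate from the explicit normalization $Z_\delta=(\hat c+i\hat s)/|\hat c+i\hat s|$.

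Next I would establish the high-probability closeness bound. Combining the error in $\hat c$ for $\cos(\xi t)$ from \eqref{eq:errcos} with the error in $\hat s$ for $\sin(\xi t)$ from \eqref{eq:errsin} gives exactly the combined estimate $|e^{i\xi t}-Z_\delta(t)|\le\eta$ displayed just before the lemma, where $\eta$ is the sum of the two contributions and each contribution is controlled by $\eta_1+M\eta_0+(2|\alpha_i|^2+1)/M$. This bound is deterministic once we condition on the two events $|\bar Z_i-\braket{b}_{\alpha_i,t}|\le \eta_1+M\eta_0+(2|\alpha_i|^2+1)/M$, $i=1,2$. Each such event fails with probability at most $4e^{-L\eta_1^2/(2M^2)}$ by the Hoeffding bound already derived for $\bar Z$, so a union bound over the two sample means $\bar Z_1,\bar Z_2$ yields total failure probability at most $8e^{-L\eta_1^2/(2M^2)}$; requiring this to be $\le\delta$ forces $L\ge (2M^2/\eta_1^2)\log(8/\delta)$, which is precisely the last line of \eqref{eq:paramsxi} and explains the appearance of $8$ rather than $4$.

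It then remains to enforce the third hypothesis $2\arcsin(\eta/2)\le\pi/3$ (since $C_f=0$), which is equivalent to $\eta\le 1$, i.e.\ to the inequality ``combined error $\le 1$'' stated right before the lemma. I would solve this single scalar inequality in sequence: first pin down $\alpha_1,\alpha_2$ subject to $|\alpha_i|^2<\pi/3$ and $\beta<\pi/2$ (so that $\arcsin$ inverts $\sin(\beta\sin\xi t)$ cleanly within its Lipschitz regime), then choose $M$ large enough that the $M$-independent pieces alone fall below $1$, then $\eta_0$ small enough given $M$, then $\eta_1$ small enough given $M$ and $\eta_0$; each step isolates one variable, and the monotonicity of the expression makes the feasible ranges exactly the displayed thresholds in \eqref{eq:paramsxi}. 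Finally, the fourth hypothesis on evolution time follows because generating $\bar Z_1,\bar Z_2$ requires $L=\Or(M^2\eta_1^{-2}\log(\delta^{-1}))$ homodyne shots, each at evolution time $\Or(t)$, so with $\alpha_1,\alpha_2,M,\eta_0,\eta_1$ all fixed $\Or(1)$ constants the total is $\Or(t(\log(\delta^{-1})+1))$, giving $C_Z=\Or(1)$.

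The hard part is not any single estimate but the bookkeeping showing that the nested constraints in \eqref{eq:paramsxi} are simultaneously satisfiable with every constant independent of the target precision and of the system size; in particular one must check that the window $|\alpha_i|^2<\pi/3$, $\beta<\pi/2$ leaves enough room for the subsequent thresholds on $M$, $\eta_0$, $\eta_1$ to define nonempty intervals, which is what ultimately guarantees the $\Or(1)$ overhead claimed in the statement.
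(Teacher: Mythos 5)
Your proposal is correct and takes essentially the same route as the paper: the paper's own proof is precisely the derivation preceding the lemma, namely the conditional error bounds \eqref{eq:errcos} and \eqref{eq:errsin} combined into the ``error $\le 1$'' inequality (equivalent to $2\arcsin(\eta/2)\le \pi/3$ with $C_f=0$, frequency $-\xi$), solved sequentially for $M$, $\eta_0$, $\eta_1$ to produce \eqref{eq:paramsxi}, with the union bound over $\bar{Z}_1,\bar{Z}_2$ giving the $\log(8/\delta)$ sample count and hence the $\mathcal{O}(t(\log(1/\delta)+1))$ runtime. Your reading of the $8$ versus $4$ in the failure probability and the role of the nested, monotone constraints matches the paper's construction exactly.
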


\section{Learning two coupled anharmonic oscillators}
\label{sec:learning_two_coupled_anharmonic_oscillators_appendix}

In this section, we consider a system consisting of two coupled anharmonic oscillators, and the Hamiltonian is of the following form:
\begin{equation}
\label{eq:two_mode_hamiltonian_appendix}
    H = \omega_1 b_1^{\dagger}b_1 + \omega_2 b_2^{\dagger}b_2 + h_{12}b_1^{\dagger}b_2 + h_{21}b_2^{\dagger}b_1 + \frac{\xi_1}{2}n_1(n_1-1) + \frac{\xi_2}{2}n_2(n_2-1)
\end{equation}
The goal is to learn all the coefficients $\omega_{1}$, $\omega_{2}$, $\xi_{1}$, $\xi_{2}$, and $h_{12}$ ($h_{21}=h^*_{12}$).

\subsection{Single-mode coefficients}
\label{sec:single_mode_coeff_appendix}
We first focus on learning the single-mode coefficients $\omega_1$, $\omega_2$, $\xi_1$, and $\xi_2$. To do this, we will insert random unitaries during time evolution to decouple the bosonic modes from each other. In other words, the time evolution operator undergoes the following transformation 
\begin{equation}
    \label{eq:time_evolution_transformation_appendix}
    e^{-iHt} \mapsto \prod_{j=1}^{r} U_j^{\dagger}e^{-iH\tau} U_j = \prod_{j=1}^{r} e^{-iU_j^{\dagger}HU_j\tau},
\end{equation}
where the $U_j$, $j=1,2,\cdots,r$, are the random linear optics unitaries that we insert, $r=t/\tau$, and the product goes from right to left. Each $U_j$ is independently drawn from a distribution that we denote by $\mathcal{D}$. In the limit of $\tau\to 0$, the dynamics can be described by an effective Hamiltonian
\begin{equation}
    \label{eq:effective_hamiltonian_appendix}
    H_{\mathrm{effective}} = \mathbb{E}_{U\sim \mathcal{D}} U^{\dagger}HU.
\end{equation}
This can be seen by considering the Taylor expansion of the time-evolved state in a small time step:
\begin{equation}
\label{eq:first_order_approx_short_time_step_appendix}
\begin{aligned}
    \mathbb{E}_{U\sim\mathcal{D}}[e^{-iU^{\dagger}HU\tau}\rho e^{iU^{\dagger}HU\tau}] &= \rho - i\tau \mathbb{E}_{U\sim\mathcal{D}}[[U^{\dagger}HU,\rho]] + \Or(\tau^2) \\ 
    &= e^{-i\mathbb{E}_{U\sim\mathcal{D}}[U^{\dagger}HU]\tau}\rho e^{i\mathbb{E}_{U\sim\mathcal{D}}[U^{\dagger}HU]\tau} + \Or(\tau^2).
\end{aligned}
\end{equation}
The above is not a rigorous proof because the $\Or(\tau^2)$ residue is an unbounded operator. We will provide a rigorous bound of how far the actual dynamics deviate from the limiting effective dynamics with finite $\tau>0$ in Section~\ref{sec:deviation_from_effective_dynamics_appendix}.

To learn all the single mode coefficients, we let the unitary $U$ drawn from the distribution $\mathcal{D}$ be 
\begin{equation}
    U = e^{-i\theta b_1^{\dagger}b_1},\quad \theta\sim\mathcal{U}([0,2\pi]).
\end{equation}
Here $\mathcal{U}([0,2\pi])$ is the uniform distribution over $[0,2\pi]$.
We can then compute the effective Hamiltonian
\begin{equation}
    H_{\mathrm{effective}} = \frac{1}{2\pi}\int_0^{2\pi} e^{i\theta b_1^{\dagger}b_1}He^{-i\theta b_1^{\dagger}b_1}\dd \theta = \omega_1 b_1^{\dagger}b_1 + \omega_2 b_2^{\dagger}b_2 + \frac{\xi_1}{2}n_1(n_1-1) + \frac{\xi_2}{2}n_2(n_2-1).
\end{equation}
In other words, the coupling term $h_{12}b_1^{\dagger}b_2 + h_{21}b_2^{\dagger}b_1$ is cancelled in the process, due to the equality
\begin{equation}
    \frac{1}{2\pi}\int_0^{2\pi} e^{i\theta b_1^{\dagger}b_1}b_1e^{-i\theta b_1^{\dagger}b_1} \dd\theta=\frac{1}{2\pi}\int_0^{2\pi} e^{i\theta}b_1 \dd\theta = 0.
\end{equation}
We can interpret this procedure as enforcing a particle number conservation on the first bosonic mode.

The effective Hamiltonian has the desirable feature that the two bosonic modes are no longer coupled together. Therefore we can apply the learning algorithm described in Section~\ref{sec:learning_an_anharmonic_oscillator_appendix} to learn the parameters of the two modes separately.

\subsection{The coupling coefficient}
\label{sec:coupling_coeff_appendix}

Next, we consider learning the coupling coefficient $h_{12}$. We observe that the coupling term can be transformed into a local one under a single-particle basis transformation. This is done through the following two operators
\begin{equation}
    U_{x}(\theta) = e^{i\theta (b_1^{\dagger}b_2+b_2^{\dagger}b_1)},\quad U_{y}(\theta) = e^{\theta (b_1^{\dagger}b_2-b_2^{\dagger}b_1)},
\end{equation}
which correspond to Pauli-$X$ and $Y$ rotations. They transform the annihilation operators in the following way
\begin{equation}
\label{eq:single_particle_basis_rotation_XY}
    \begin{pmatrix}
        U_{x}(\theta)b_1 U_{x}^{\dagger}(\theta) \\
        U_{x}(\theta)b_2 U_{x}^{\dagger}(\theta)
    \end{pmatrix}
    =
    \begin{pmatrix}
        \cos(\theta) & i\sin(\theta) \\
        i\sin(\theta) & \cos(\theta)
    \end{pmatrix}
    \begin{pmatrix}
        b_1 \\
        b_2
    \end{pmatrix},\ 
    \begin{pmatrix}
        U_{y}(\theta)b_1 U_{y}^{\dagger}(\theta) \\
        U_{y}(\theta)b_2 U_{y}^{\dagger}(\theta)
    \end{pmatrix}
    =
    \begin{pmatrix}
        \cos(\theta) & \sin(\theta) \\
        -\sin(\theta) & \cos(\theta)
    \end{pmatrix}
    \begin{pmatrix}
        b_1 \\
        b_2
    \end{pmatrix}.
\end{equation}
We first perform the Pauli-$Y$ rotation and define 
\begin{equation}
\label{eq:change_of_basis_Y_pi/4}
    \tilde{b}_1 = U_y(\pi/4)b_1 U_y^{\dagger}(\pi/4),\quad \tilde{b}_2 = U_y(\pi/4)b_2 U_y^{\dagger}(\pi/4).
\end{equation}
Through \eqref{eq:single_particle_basis_rotation_XY} we have
\begin{equation}
    \begin{pmatrix}
        b_1 \\
        b_2
    \end{pmatrix}
    =\frac{1}{\sqrt{2}}
    \begin{pmatrix}
        1 & -1 \\
        1 & 1
    \end{pmatrix}
    \begin{pmatrix}
        \tilde{b}_1 \\
        \tilde{b}_2
    \end{pmatrix}
\end{equation}
We will then rewrite the Hamiltonian \eqref{eq:two_mode_hamiltonian_appendix} in terms of $\tilde{b}_1$ and $\tilde{b}_2$. The quadratic part of $H$ can be written as
\begin{equation}
\label{eq:transformed_quadratic_part}
    \tilde{\omega}_1 \tilde{b}_1^{\dagger}\tilde{b}_1 + \tilde{\omega}_2 \tilde{b}_2^{\dagger}\tilde{b}_2 + \tilde{h}_{12}\tilde{b}_1^{\dagger}\tilde{b}_2 + \tilde{h}_{21}\tilde{b}_2^{\dagger}\tilde{b}_1,
\end{equation}
where
\begin{equation}
    \begin{pmatrix}
        \tilde{\omega}_1 & \tilde{h}_{12} \\
         \tilde{h}_{21} & \tilde{\omega}_2
    \end{pmatrix}
    =\frac{1}{2}
    \begin{pmatrix}
        1 & 1\\
        -1 & 1
    \end{pmatrix}
    \begin{pmatrix}
        \omega_1 & h_{12} \\
        h_{21} & \omega_2
    \end{pmatrix}
    \begin{pmatrix}
        1 & -1\\
        1 & 1
    \end{pmatrix}.
\end{equation}
In particular, we have
\begin{equation}
\label{eq:omega_tilde_expression_appendix}
    \tilde{\omega}_1 = \frac{\omega_1+\omega_2}{2}+\Re h_{12}.
\end{equation}
For the quartic part, we have 
\begin{equation}
\label{eq:transformed_quartic_part}
\begin{aligned}
    \frac{\xi_1}{2}n_1(n_1-1) &= \frac{\xi_1}{2}b_1^{\dagger}b_1^{\dagger}b_1b_1 = \sum_{ijkl=1}^2 \xi^{(1)}_{ijkl}\tilde{b}^{\dagger}_i\tilde{b}^{\dagger}_j\tilde{b}_k\tilde{b}_l, \\
    \frac{\xi_2}{2}n_2(n_2-1) &= \frac{\xi_2}{2}b_2^{\dagger}b_2^{\dagger}b_2b_2 = \sum_{ijkl=1}^2 \xi^{(2)}_{ijkl}\tilde{b}^{\dagger}_i\tilde{b}^{\dagger}_j\tilde{b}_k\tilde{b}_l.
\end{aligned}
\end{equation}
In particular
\begin{equation}
\label{eq:leading_quartic_term_coeff}
    \xi^{(1)}_{1111} = \frac{\xi_1}{4},\quad \xi^{(2)}_{1111} = \frac{\xi_2}{4}.
\end{equation}
Combining \eqref{eq:transformed_quadratic_part} and \eqref{eq:transformed_quartic_part}, the Hamiltonian $H$ can be written in terms of $\tilde{b}_1$ and $\tilde{b}_2$ as
\begin{equation}
    \label{eq:H_under_transformed_basis}
    H =  \tilde{\omega}_1 \tilde{b}_1^{\dagger}\tilde{b}_1 + \tilde{\omega}_2 \tilde{b}_2^{\dagger}\tilde{b}_2 + \tilde{h}_{12}\tilde{b}_1^{\dagger}\tilde{b}_2 + \tilde{h}_{21}\tilde{b}_2^{\dagger}\tilde{b}_1 + \sum_{ijkl=1}^2 (\xi^{(1)}_{ijkl}+\xi^{(2)}_{ijkl})\tilde{b}^{\dagger}_i\tilde{b}^{\dagger}_j\tilde{b}_k\tilde{b}_l.
\end{equation}

The above expression is much more complicated than the original expression in \eqref{eq:two_mode_hamiltonian_appendix}, but we will use random unitaries to produce a much simpler effective Hamiltonian. This time, the random unitary we use will be
\begin{equation}
    U = e^{-i\theta \tilde{b}_1^{\dagger}\tilde{b}_1},\quad \theta\sim\mathcal{U}([0,2\pi]).
\end{equation}
With the same derivation as in \eqref{eq:first_order_approx_short_time_step_appendix}, we can obtain the effective Hamiltonian as $\mathbb{E}[U^{\dagger} HU]$. Note that in conjugating with $e^{i\theta \tilde{b}_1^{\dagger}\tilde{b}_1}$, each $\tilde{b}_1$ in the Hamiltonian acquires a phase $e^{i\theta}$, and each $\tilde{b}_1^{\dagger}$ acquires a phase $e^{-i\theta}$. If in a Hamiltonian term $\tilde{b}_1$ and $\tilde{b}_1^{\dagger}$ do not appear the same number of times, then the term will acquire a phase $e^{ic\theta}$ with $c\in\{-2,-1,1,2\}$, and integrating over $\theta$ will cancel out this term. For example
\begin{equation}
    \begin{aligned}
        &\frac{1}{2\pi}\int_0^{2\pi} e^{i\theta \tilde{b}_1^{\dagger}\tilde{b}_1} \tilde{b}_1^{\dagger}\tilde{b}_2 e^{-i\theta \tilde{b}_1^{\dagger}\tilde{b}_1} \dd \theta = \frac{1}{2\pi}\int_0^{2\pi} e^{i\theta} \tilde{b}_1^{\dagger}\tilde{b}_2 \dd \theta = 0,\\
        &\frac{1}{2\pi}\int_0^{2\pi} e^{i\theta \tilde{b}_1^{\dagger}\tilde{b}_1} \tilde{b}_1^{\dagger}\tilde{b}_1^{\dagger}\tilde{b}_2\tilde{b}_2 e^{-i\theta \tilde{b}_1^{\dagger}\tilde{b}_1} \dd \theta = \frac{1}{2\pi}\int_0^{2\pi} e^{2i\theta} \tilde{b}_1^{\dagger}\tilde{b}_1^{\dagger}\tilde{b}_2\tilde{b}_2 \dd \theta = 0.
    \end{aligned}
\end{equation}
In other words, only the terms that conserve the particle number on the first bosonic mode are preserved in the effective Hamiltonian. We can then write the effective Hamiltonian as
\begin{equation}
\label{eq:effective_hamiltonian_two_mode_intermediate}
    H_{\mathrm{effective}} = \tilde{\omega}_1 \tilde{b}_1^{\dagger}\tilde{b}_1 + \tilde{\omega}_2 \tilde{b}_2^{\dagger}\tilde{b}_2 + (\xi^{(1)}_{1111}+\xi^{(2)}_{1111})\tilde{n}_1(\tilde{n}_1-1) + (A\tilde{n}_1+B\tilde{n}_2+C)\tilde{n}_2,
\end{equation}
where $\tilde{n}_1 = \tilde{b}_1^{\dagger}\tilde{b}_1$, and $\tilde{n}_2 = \tilde{b}_2^{\dagger}\tilde{b}_2$.

Recall that our goal is to learn the coupling coefficient $h_{12}$, whose real part can be derived from $\tilde{\omega}_1$, $\omega_1$, and $\omega_2$ through \eqref{eq:omega_tilde_expression_appendix}, and $\omega_1$, and $\omega_2$ can be learned using the procedure outlined in Section~\ref{sec:single_mode_coeff_appendix}. We, therefore, only need to estimate $\tilde{\omega}_1$ from the effective Hamiltonian.

To do this, we start with a product state $\ket{\alpha}\ket{0}$ on the two bosonic modes. Then we apply $U_y(\pi/4)$ to this state to get the initial state of our time evolution
\begin{equation}
    \ket{\Phi(0)} = U_y(\pi/4)\ket{\alpha}\ket{0}.
\end{equation}
This state is the tensor product of the coherent states of $\tilde{b}_1$ and $\tilde{b}_2$ because one can verify that, using \eqref{eq:change_of_basis_Y_pi/4}, 
\begin{equation}
\label{eq:coherent_state_tilde_b}
    \begin{aligned}
        \tilde{b}_1\ket{\Phi(0)}=\tilde{b}_1 U_y(\pi/4)\ket{\alpha}\ket{0} &= U_y(\pi/4) b_1\ket{\alpha}\ket{0} = \alpha U_y(\pi/4)\ket{\alpha}\ket{0} \\
        \tilde{b}_2\ket{\Phi(0)}=\tilde{b}_2 U_y(\pi/4)\ket{\alpha}\ket{0} &= U_y(\pi/4) b_2\ket{\alpha}\ket{0} = 0.
    \end{aligned}
\end{equation}
Because of the above equation, we can see that there is no particle in the bosonic mode $\tilde{b}_2$ in this state $\ket{\Phi(0)}$. As the effective Hamiltonian in \eqref{eq:effective_hamiltonian_two_mode_intermediate} conserves the particle number on both bosonic modes, the particle number on the mode $\tilde{b}_2$ will stay $0$. Consequently, any term that involves $\tilde{n}_2$ will not affect the dynamics. Therefore we can safely discard these terms and get a new effective Hamiltonian
\begin{equation}
\label{eq:two_mode_effective_hamiltonian}
    H_{\mathrm{effective}}' = \tilde{\omega}_1 \tilde{b}_1^{\dagger}\tilde{b}_1 + (\xi^{(1)}_{1111}+\xi^{(2)}_{1111})\tilde{n}_1(\tilde{n}_1-1).
\end{equation}
Note that this Hamiltonian only acts non-trivially on the bosonic mode $\tilde{b}_1$. Therefore we can use the single-mode protocol in Section~\ref{sec:learning_an_anharmonic_oscillator_appendix} to learn the coefficient $\tilde{\omega}_1$. As guaranteed in \eqref{eq:coherent_state_tilde_b}, we start from the $\alpha$-coherent state for $\tilde{b}_1$. In the time evolution, the expectation value $\braket{\tilde{b}_1}$ contains the information to determine $\tilde{\omega}_1$. The expectation value $\braket{\tilde{b}_1}$ can be extracted through homodyne measurement with two quadrature operators. 
Note that we need to convert this homodyne measurement into homodyne measurement for $b_1$ or $b_2$. This can be easily done because $\tilde{b}_1 = U_y(\pi/4)b_1 U_y^{\dagger}(\pi/4)$. We can therefore apply the unitary $U_y^{\dagger}(\pi/4)$ at the end of the time evolution and then perform homodyne measurement for $(b_1+b_1^{\dagger})/\sqrt{2}$ and $i(b_1-b_1^{\dagger})/\sqrt{2}$, which combined yields the expectation value $\braket{\tilde{b}_1}$.

Let us now briefly summarize the whole procedure. We start from a state $\ket{\alpha}\ket{0}$, apply $U_y(\pi/4)$, let the system evolve for time $t=r\tau$, while applying random $e^{-i\theta\tilde{b}_1^{\dagger}\tilde{b}_1}$ with interval $\tau$, and in the end apply $U_y^{\dagger}(\pi/4)=U_y(-\pi/4)$, after which we perform homodyne measurement for $b_1$. The quantum state right before the measurement is applied is
\begin{equation}
    U_y(-\pi/4)\prod_{j=1}^r(e^{i\theta_j\tilde{b}_1^{\dagger}\tilde{b}_1}e^{-iH\tau}e^{-i\theta_j\tilde{b}_1^{\dagger}\tilde{b}_1})U_y(\pi/4)\ket{\alpha}\ket{0},
\end{equation}
for randomly sampled $\theta_j$, $j=1,2,\cdots,r$.
Note that $e^{-i\theta_j\tilde{b}_1^{\dagger}\tilde{b}_1}=e^{-i(\theta_j/2)(n_1+n_2)}U_x(-\theta_j/2)$, and $H$ commute with $n_1+n_2$ because the particle number is conserved. We therefore have
\begin{equation}
    e^{i\theta_j\tilde{b}_1^{\dagger}\tilde{b}_1}e^{-iH\tau}e^{-i\theta_j\tilde{b}_1^{\dagger}\tilde{b}_1} = U_x(\theta/2)e^{-iH\tau}U_x(-\theta/2).
\end{equation}
Consequently we can replace all $e^{-i\theta_j\tilde{b}_1^{\dagger}\tilde{b}_1}$ with $U_x(-\theta_j/2)$. The quantum state we get in the end is, therefore
\begin{equation}
    U_y\left(-\frac{\pi}{4}\right)\prod_{j=1}^r\left(U_x\left(\frac{\theta_j}{2}\right)e^{-iH\tau}U_x\left(-\frac{\theta_j}{2}\right)\right)U_y\left(\frac{\pi}{4}\right)\ket{\alpha}\ket{0}.
\end{equation}
Note that the adjacent $U_x(-\theta_j/2)$ and $U_x(\theta_{j-1}/2)$ can be merged into $U_x(-(\theta_j-\theta_{j-1})/2)$, so that we only need to apply one $X$ rotation in each time step instead of two.



In the above procedure, we estimate $\tilde{\omega}_1$, which through \eqref{eq:omega_tilde_expression_appendix} we can estimate $\Re h_{12}$. For $\Im h_{12}$, we can instead define 
\begin{equation}
    \tilde{b}_1 = U_x(\pi/4)b_1U_x^{\dagger}(\pi/4),\quad \tilde{b}_2 = U_x(\pi/4)b_2U_x^{\dagger}(\pi/4),
\end{equation}
and then \eqref{eq:omega_tilde_expression} will become
\begin{equation}
    \tilde{\omega}_1 = \frac{\omega_1+\omega_2}{2}+\Im h_{12}.
\end{equation}
We can then change the whole procedure accordingly to estimate $\Im h_{12}$, and the corresponding state before the measurement is
\begin{equation}
    U_x\left(-\frac{\pi}{4}\right)\prod_{j=1}^r\left(U_y\left(-\frac{\theta_j}{2}\right)e^{-iH\tau}U_y\left(\frac{\theta_j}{2}\right)\right)U_x\left(\frac{\pi}{4}\right)\ket{\alpha}\ket{0}.
\end{equation}

\section{Using a divide-and-conquer approach to learn an $N$-mode system}
\label{sec:divide_and_conquer_for_N_mode_appendix}

In this section, we consider the general case, where the Hamiltonian is of the form:
\begin{equation}
\label{eq:hamiltonian_general_appendix}
    H = \sum_{\braket{i,j}} h_{ij}b_i^{\dagger}b_j + \sum_i \omega_i b_i^{\dagger}b_i + \frac{\xi_i}{2}\sum_i n_i(n_i-1).
\end{equation}
We will use a divide-and-conquer approach to learn the coefficients in this Hamiltonian. Specifically, we will insert random unitaries during time evolution to decouple the system into clusters containing one or two modes that do not interact with each other and learn the coefficients in each cluster in parallel.

We assume that the bosonic modes are arranged on a graph $\mathcal{G}=(\mathcal{V},\mathcal{E})$, where $\mathcal{V}$ is the set containing all vertices, each of which corresponds to a bosonic mode, and $\mathcal{E}$ contains all edges. $\sum_{\braket{i,j}}$ means summation over all vertices linked by an edge.

We consider decoupling the system with the help of a graph $\mathcal{L}=(\mathcal{E},\mathcal{E}_{\mathcal{L}})$ that is the link graph of $\mathcal{G}$. The set $\mathcal{E}$ is the set of all edges in $\mathcal{G}$, and $\mathcal{E}_{\mathcal{L}}$ is the set of edges of $\mathcal{L}$, which we will now define. For any two edges $e,e'\in\mathcal{E}$, we have $(e,e')\in \mathcal{E}$ if and only if they share a vertex in $\mathcal{V}$. 

Next, we color the graph $\mathcal{L}$ with the following rule: any two vertices in $\mathcal{E}$ must be colored differently if they are at most distance $2$ from each other. The number of colors needed for this coloring is at most $\chi=\operatorname{deg}(\mathcal{L})^2+1$, and such a coloring can be easily found by a greedy algorithm: we can simply color a vertex by any color that its neighbors or next-neighbors have not used, and such a color is always available because there are at most $\chi-1$ neighbors and next-neighbors. For a graph $\mathcal{G}$ with degree $D$, $\operatorname{deg}(\mathcal{L})\leq 2(D-1)$, and therefore $\chi\leq 4(D-1)^2+1$. This coloring yields a decomposition of the edges
\begin{equation}
    \mathcal{E} = \bigsqcup_{c=1}^{\chi}\mathcal{E}_c,
\end{equation}
where $\mathcal{E}_c$ is the set of edges with color $c$.

For each color $c=1,2,\cdots,\chi$, we then learn all the coefficients associated with this color. We denote by $\mathcal{V}_c$ all the vertices (bosonic modes) that are contained in an edge in $\mathcal{E}_c$. During time evolution, we apply random unitaries of the form 
\begin{equation}
    U = \prod_{i\in\mathcal{V}\setminus \mathcal{V}_c} e^{-i\theta_i b_i^{\dagger}b_i},\quad \theta_i\sim\mathcal{U}([0,2\pi]).
\end{equation}
Here $\theta_i$, $i\in \mathcal{V}\setminus \mathcal{V}_c$, are independent random variables. Following the derivation in \eqref{eq:first_order_approx_short_time_step_appendix}, we can see that the effective Hamiltonian is
\begin{equation}
    H_{\mathrm{effective}}=\prod_{i\in\mathcal{V}\setminus \mathcal{V}_c}\left(\frac{1}{2\pi}\int_0^{2\pi}\dd \theta_i \right) e^{-i\sum_{i\in\mathcal{V}\setminus \mathcal{V}_c}\theta_i n_i}H e^{i\sum_{i\in\mathcal{V}\setminus \mathcal{V}_c}\theta_i n_i}.
\end{equation}

We can then examine the effect of this transformation on each term. For a term $b_k^{\dagger}b_l$, $k\neq l$, if $k$ is in $\mathcal{V}\setminus \mathcal{V}_c$ but $l$ is not, then
\begin{equation}
    \prod_{i\in\mathcal{V}\setminus \mathcal{V}_c}\left(\frac{1}{2\pi}\int_0^{2\pi}\dd \theta_i \right) e^{-i\sum_{i\in\mathcal{V}\setminus \mathcal{V}_c}\theta_i n_i} b_k^{\dagger}b_l e^{i\sum_{i\in\mathcal{V}\setminus \mathcal{V}_c}\theta_i n_i} = \frac{1}{2\pi}\int_0^{2\pi} \dd \omega_k e^{i\omega_k} b_k^{\dagger}b_l = 0
\end{equation}
The same is true if $l$ is in $\mathcal{V}\setminus \mathcal{V}_c$ but $k$ is not. When both $k,l\in\mathcal{V}\setminus \mathcal{V}_c$, then
\begin{equation}
\begin{aligned}
    &\prod_{i\in\mathcal{V}\setminus \mathcal{V}_c}\left(\frac{1}{2\pi}\int_0^{2\pi}\dd \theta_i \right) e^{-i\sum_{i\in\mathcal{V}\setminus \mathcal{V}_c}\theta_i n_i} b_k^{\dagger}b_l e^{i\sum_{i\in\mathcal{V}\setminus \mathcal{V}_c}\theta_i n_i} \\
    &= \frac{1}{(2\pi)^2}\int_0^{2\pi} \dd \omega_k \int_0^{2\pi} \dd \omega_l e^{i(\omega_k-\omega_l)} b_k^{\dagger}b_l = 0.
\end{aligned}
\end{equation}
In other words, for any coupling term $b^{\dagger}_k b_l$, the above procedure will cancel it out if either $k$ or $l$ is in $\mathcal{V}\setminus \mathcal{V}_c$. All other terms are preserved because they commute with $n_i$ for any $i\in\mathcal{V}\setminus \mathcal{V}_c$. The only possible $b^{\dagger}_k b_l$ terms left are those with $k,l\in\mathcal{V}_c$. 
This also means that $(k,l)\in\mathcal{E}_c$ because of the following argument: first by definition of $\mathcal{V}_c$ there must exists $k'$ and $l'$ such that $(k,k')\in\mathcal{E}_c$ and $(l,l')\in\mathcal{E}_c$. We must have $(k,l)\in\mathcal{E}$, as otherwise, this coupling term would not exist at all. This means that unless $k'=l'$, the two edges $(k,k')$ and $(l,l')$ as vertices in $\mathcal{L}$ are next-neighbors, which is not allowed in our coloring. Therefore $k'=l'$ and we have $(k,l)\in\mathcal{E}_c$.
Consequently, the effective Hamiltonian is
\begin{equation}
\label{eq:effective_hamiltonian_decoupled}
    H_{\mathrm{effective}} = \sum_{(i,j)\in \mathcal{E}_c} h_{ij}b_i^{\dagger}b_j + \sum_i \omega_i b_i^{\dagger}b_i + \frac{\xi_i}{2}\sum_i n_i(n_i-1).
\end{equation}

Next, we will show that the above Hamiltonian is decoupled into clusters of sizes at most $2$. We will do this by showing that any bosonic mode $i$ interacts with at most one other bosonic mode in the above Hamiltonian. This can be proved by contradiction: if $i$ interacts with both $j$ and $k$ in the above Hamiltonian, then $(i,j)\in\mathcal{E}_c$ and $(i,k)\in\mathcal{E}_c$, which makes $(i,j)$ and $(i,k)$ neighbors as vertices in $\mathcal{L}$, and this is forbidden in our coloring.

With the decoupled Hamiltonian in \eqref{eq:effective_hamiltonian_decoupled}, we can then learn the coefficients in each one- or two-mode cluster independently and in parallel using the algorithms described in Sections~\ref{sec:learning_an_anharmonic_oscillator_appendix} and \ref{sec:learning_two_coupled_anharmonic_oscillators_appendix}. Looping over all colors $c\in\{1,2,\cdots,\chi\}$, we will obtain all the coefficients in the Hamiltonian.

\section{Deviation from the effective dynamics}
\label{sec:deviation_from_effective_dynamics_appendix}
In this section, we consider the error introduced by simulating the effective dynamics with the insertion of random unitaries, as mentioned in Section~\ref{sec:single_mode_coeff_appendix}. Suppose $\mathcal{D}$ is a distribution over the set of unitaries, and the initial state of the system is represented by the density matrix $\rho(0)$. The actual final state obtained after the insertion of $r$ random unitaries is
\begin{equation}\label{eq:qdriftres}
    \E_{U_j\sim\mathcal{D}} \left(\prod_{1\le j\le r}^{\leftarrow}U_j^\dagger e^{-i\tau H} U_j\right)\rho(0)\left(\prod_{1\le j\le r}^{\rightarrow}U_j^\dagger e^{i\tau H} U_j\right),
\end{equation}
where each $U_j$ is inserted after time $\tau=\frac{t}{r}$. On the other hand, the desired final state, which facilitates the subsequent steps of the learning process, is 
\begin{equation}
     e^{-it H_{\mathrm{effective}}}\rho(0) e^{it H_{\mathrm{effective}}} ,
\end{equation}
where $H_{\mathrm{effective}}$ is the effective Hamiltonian:
\[
  H_{\mathrm{effective}} = \mathbb{E}_{U\sim \mathcal{D}} U^{\dagger}HU.
\]
In this section, we provide an analysis of the difference between the two dynamics for a certain class of Hamiltonians and thereby complete the analysis of approximation errors investigated in Section~\ref{sec:learning_an_anharmonic_oscillator_appendix}. For the sake of the Hamiltonians studied in this paper, we consider the Hamiltonians of the following form:
\begin{equation}\label{eq:generalH}
H = \sum_{\braket{i,j}} h_{ij} b_i^{\dagger}b_j + \sum_i\omega_i n_i + \frac{1}{2}\sum_{\braket{jklm}} \xi_{jklm}b_j^{\dagger}b_k^{\dagger}b_lb_m,
\end{equation}
where in the last term we denote by $\braket{jklm}$ the index quadruples such that $\{j,k,l,m\}$ form a connected subgraph in the underlying graph $\mathcal{G}=(\mathcal{V},\mathcal{E})$ of bosonic modes. We begin with a lemma describing the action of these Hamiltonians on the product of coherent states.
\begin{lem}\label{lem:norm Hphi}
Let
\begin{equation}
H = \sum_{\braket{i,j}} h_{ij} b_i^{\dagger}b_j + \sum_i\omega_i n_i + \frac{1}{2}\sum_{\braket{jklm}} \xi_{jklm}b_j^{\dagger}b_k^{\dagger}b_lb_m,
\end{equation}
and 
\begin{equation}\label{eq:statephi}
	\phit = \bigotimes_{i\in\mathcal{V}} \left(e^{-|\alpha_i|^2/2}\sum_{k=0}^{\infty}\frac{\alpha_i^k e^{-i\zeta_{i,k}}}{\sqrt{k!}}\ket{k}_i\right),
\end{equation}
where $\alpha_i$ is a complex number of magnitude $O(1)$, and $\zeta_{i,k}\in\RR$ can be any real number. Then 
\begin{equation}
	\|H\phit\| = O(N\max\{|\xi_{jklm}|, |\omega_i|,|h_{i,j}|\}),\label{eq:hphi_norm}
\end{equation}
and
\begin{equation}
	\|H^2\phit\| = O(N^2(\max\{|\xi_{jklm}|, |\omega_i|,|h_{i,j}|\})^2),\label{eq:h2phi_norm}
\end{equation}
where $N=|\mathcal{V}|+|\mathcal{E}|$. 
\end{lem}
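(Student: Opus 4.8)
The plan is to bound both norms by the triangle inequality over the individual monomial terms of $H$ (respectively of $H^2$), reducing each ``monomial applied to $\phit$'' norm to a product of single-mode factors. The structural fact I would isolate first is that, although the phases $e^{-i\zeta_{i,k}}$ prevent $\phit$ from being a genuine coherent state, they leave the magnitude of the amplitude of $\ket{k}_i$ equal to $e^{-\abs{\alpha_i}^2/2}\abs{\alpha_i}^k/\sqrt{k!}$. Hence on every mode the number distribution is exactly Poisson with mean $\abs{\alpha_i}^2$, just as for $\ket{\alpha_i}$, and every moment of $n_i$ is $O(1)$ and independent of the phases. This is the one place where the decay of the coherent amplitudes is used to tame the unboundedness of $b_i,b_i^\dagger$.

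\textbf{Reduction to a single mode.} Since $\phit=\bigotimes_{i\in\mathcal{V}}\phit_i$ is a product state and every term of $H$ (and of $H^2$) is a tensor product $\bigotimes_i M_i$ of single-mode monomials in $b_i,b_i^\dagger$, with $M_i$ the identity off the support of the term, the norm factorizes as $\norm{(\bigotimes_i M_i)\phit}^2=\prod_i\norm{M_i\phit_i}^2$, and only the $O(1)$ modes in the support contribute a nontrivial factor. Everything thus reduces to showing $\norm{M_i\phit_i}=O(1)$ for a single-mode monomial $M_i$ of bounded degree when $\abs{\alpha_i}=O(1)$. For a degree-$d$ monomial I would write $\norm{M_i\phit_i}^2=\braket{\phit_i|M_i^\dagger M_i|\phit_i}$, normal-order $M_i^\dagger M_i$ into $O(1)$ terms $(b_i^\dagger)^p b_i^q$ with $p,q\le 2d$ and $O(1)$ coefficients, and bound each expectation. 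Using $b_i^q\phit_i=e^{-\abs{\alpha_i}^2/2}\sum_{k\ge0}\frac{\alpha_i^{k+q}e^{-i\zeta_{i,k+q}}}{\sqrt{k!}}\ket{k}_i$, a direct computation gives $\abs{\braket{\phit_i|(b_i^\dagger)^p b_i^q|\phit_i}}\le\abs{\alpha_i}^{p+q}$: the diagonal case $p=q$ is a phase-independent Poisson moment, and the off-diagonal case is bounded by summing the magnitudes term by term. Summing the $O(1)$ contributions yields $\norm{M_i\phit_i}^2=O(1)$, uniformly in the phases.

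\textbf{Assembling the two bounds.} For \eqref{eq:hphi_norm}, the triangle inequality over the $\abs{\mathcal{V}}$ diagonal terms $\omega_i n_i$, the $O(\abs{\mathcal{E}})$ hopping terms $h_{ij}b_i^\dagger b_j$, and the quartic terms $\tfrac12\xi_{jklm}b_j^\dagger b_k^\dagger b_l b_m$ over connected quadruples gives the claim: on a bounded-degree graph the number of connected quadruples is $O(N)$ (a constant per vertex, the constant absorbing the degree), each monomial norm is $O(1)$, and each coefficient is at most $\max\{\abs{\xi_{jklm}},\abs{\omega_i},\abs{h_{ij}}\}$, so the total is $O(N)\max\{\cdots\}$. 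For \eqref{eq:h2phi_norm}, I would expand $H^2=\sum_{a,b}c_ac_b\,T_aT_b$ over the $O(N^2)$ ordered pairs of terms; each product $T_aT_b$ is a monomial of degree at most $8$ supported on at most $8$ modes, so the single-mode estimate again gives $\norm{T_aT_b\phit}=O(1)$, even though $T_aT_b$ need not be normal-ordered and the supports of $T_a,T_b$ may overlap. With each coefficient at most $(\max\{\cdots\})^2$, this yields $O(N^2)(\max\{\cdots\})^2$.

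\textbf{Main obstacle.} The crux is the single-mode estimate and, crucially, its uniformity in the phases $\zeta_{i,k}$; this is exactly where the unboundedness of $b_i,b_i^\dagger$ must be controlled, by exploiting that the amplitudes decay like $\abs{\alpha_i}^k/\sqrt{k!}$ so that all relevant number moments converge to $O(1)$ constants. The $H^2$ bound adds only the bookkeeping wrinkle that products of terms are no longer normal-ordered and may have overlapping supports, but the monomial-norm bound above is insensitive to both, so the remaining work is purely the term count on a bounded-degree graph.
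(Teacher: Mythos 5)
Your proposal is correct and follows essentially the same route as the paper: both proofs bound $\norm{H\phit}$ (resp.\ $\norm{H^2\phit}$) by the triangle inequality over the $O(N)$ (resp.\ $O(N^2)$) monomial terms and then reduce each per-term norm to single-mode Poisson-moment computations that are uniform in the phases $\zeta_{i,k}$. The only cosmetic difference is in the per-term estimate: the paper computes $M\phit$ directly as a number-shifted state with polynomial factors $\sqrt{P_j(k)}$ and sums the resulting Poisson series $e^{-|\alpha_j|^2}\sum_k |\alpha_j|^{2k}P_j(k)/k!$, whereas you normal-order $M^\dagger M$ and bound the matrix elements $\abs{\braket{\phit_i|(b_i^{\dagger})^p b_i^q|\phit_i}}\le \abs{\alpha_i}^{p+q}$; both hinge on the same decay of the coherent amplitudes.
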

\begin{proof}
    It suffices to prove the result for $H/\max\{|\xi_{jklm}|, |\omega_i|,|h_{i,j}|\}$. Therefore we assume $\max\{|\xi_{jklm}|, |\omega_i|,|h_{i,j}|\}=1$ without loss of generality. Notice that $H$ is the sum of $O(N)$ terms, and each term takes the form $\bd_p b_q$ or $\bd_p b_q\bd_r b_s$, where $p$, $q$, $r$, $s$ may be repeated. We will prove that each term acting on $\ket{\phit}$ yields a state whose norm is $O(1)$. We first demonstrate this for $\norm{\bd_p b_q\bd_r b_s\phit}$. Simple calculation shows
    \begin{equation}
		\begin{aligned}
			&\bd_p b_q\bd_r b_s\phit \\
			= &\bigotimes_{i\notin\{p,q,r,s\}} \left(e^{-|\alpha_i|^2/2}\sum_{k=0}^{\infty}\frac{\alpha_i^k e^{-i\zeta_{i,k}}}{\sqrt{k!}}\ket{k}_i\right)\otimes \bigotimes_{j\in\{p,q,r,s\}} \left(e^{-|\alpha_j|^2/2}\sum_{k=0}^{\infty}\frac{\alpha_j^k e^{-i\zeta_{j,k}}}{\sqrt{k!}}\sqrt{P_j(k)}\ket{k+\sigma_j}_j\right),
		\end{aligned}		
	\end{equation}
	where $P_j$'s are polynomials with $\sum_{j\in\{p,q,r,s\}}\deg P_j = 4$, and $\sigma_j$ is an integer determined by the numbers of $\bd_j$ and $b_j$ in $\bd_p b_q\bd_r b_s$. For example, if $p=q=r=1$, $s=2$, then $P_1(k)=(k+1)^3$, $\sigma_1=1$, $P_2(k)=k$, $\sigma_2=-1$. Straight calculations can show that 
	\begin{equation}
		\begin{aligned}
			&\left\|e^{-|\alpha_j|^2/2}\sum_{k=0}^{\infty}\frac{\alpha_j^k e^{-i\zeta_{j,k}}}{\sqrt{k!}}\sqrt{P_j(k)}\ket{k+\sigma_j}_j\right\|^2\\
			={}& e^{-|\alpha_j|^2}\sum_{k=0}^{\infty}\frac{|\alpha_j|^{2k}}{k!}P_j(k) = Q_j(|\alpha_j|^2) = O(1).
		\end{aligned}		
	\end{equation}
    where $Q_j$ is a polynomial that can be determined by $P_j$, but we do not care about its explicit form. Therefore we have shown that 
    \begin{equation}
        \norm{\bd_p b_q\bd_r b_s\phit}=\sqrt{\prod_{j\in\{p,q,r,s\}}Q_j(|\alpha_j|^2)}=O(1).
    \end{equation}
    Similarly, we can show that $\|\bd_p b_q\phit\|=O(1)$. Therefore \eqref{eq:hphi_norm} is established.

    Next, we will prove \eqref{eq:h2phi_norm}. We can fully expand $H^2$ into $O(N^2)$ terms, each of which has the form $\bd_p b_q \bd_{p'} b_{q'}$, $\bd_p b_q\bd_r b_s\bd_{p'} b_{q'}$, $\bd_p b_q\bd_{p'} b_{q'}\bd_{r'} b_{s'}$, or $\bd_p b_q\bd_r b_s\bd_{p'} b_{q'}\bd_{r'} b_{s'}$. Again, we may go through a similar process as above and conclude that each term acting on $\phit$ yields a state of magnitude $O(1)$.
\end{proof}

Assume that $\ket{\phi_0} = \bigotimes_i \ket{\alpha_i}$ is a product of coherent states, and $\ket{\phi_t}$ is the state obtained by evolving under the effective dynamics $\He$ for time $t$, i.e., $\ket{\phi_t} = e^{-it\He}\ket{\phi_0}$, then $\ket{\phi_t}$ is a state of the form described in \eqref{eq:statephi} for the distribution $\mathcal{D}$ used in previous sections. 
Using density matrices, the effective dynamics with the Hamiltonian $\He$ starts from the state $\rho(0):=\ket{\phi_0}\bra{\phi_0}$ and end up in the state $\rho(t):=\ket{\phi_t}\bra{\phi_t}$ at time $t$, while the actual final state obtained is given by \eqref{eq:qdriftres}.
To bound its distance from the desired state $\rho(t)$, we define the following density operators:
\begin{equation}\label{eq:telescope}
    \rho^{(\ell)}(t) = \E \left(\prod_{1\le j\le \ell}^{\leftarrow}U_j^\dagger e^{-i\tau H} U_j\right)\rho(t-\ell\tau)\left(\prod_{1\le j\le \ell}^{\rightarrow}U_j^\dagger e^{i\tau H} U_j\right).
\end{equation}
Then $\rho^{(0)}(t) = \rho(t)$ and $\rho^{(r)}(t)$ is the density operator in \eqref{eq:qdriftres}. 
Now consider the distance between $\rho^{(L-1)}(t)$ and $\rho^{(L)}(t)$. Define 
\begin{equation}
    Q^{(L)} = \prod_{1\le j\le L-1}^{\rightarrow}U_j^\dagger e^{i\tau H} U_j, 
\end{equation}
then by the independence of $U_j$, we have
\begin{equation}\label{eq:localerror}
\begin{aligned}
&\|\rho^{(L)}(t)-\rho^{(L-1)}(t)\|_* \\
&=\left\|\E_{Q^{(L)}}\left[ Q^{(L)}\left(\E_U\left(U^\dagger e^{-i\tau H} U\rho(t-L\tau)U^\dagger e^{i\tau H} U-e^{-i\tau\He}\rho(t-L\tau)e^{-i\tau\He}\right)\right)(Q^{(L)})^\dagger\right]\right\|_*\\
&\le\E_{Q^{(L)}}\left\|Q^{(L)}\left(\E_U\left(U^\dagger e^{-i\tau H} U\rho(t-L\tau)U^\dagger e^{i\tau H} U-e^{-i\tau\He}\rho(t-L\tau)e^{-i\tau\He}\right)\right)(Q^{(L)})^\dagger\right\|_*\\
&= \E_{Q^{(L)}}\left\|\E_U\left(U^\dagger e^{-i\tau H} U\rho(t-L\tau)U^\dagger e^{i\tau H} U-e^{-i\tau\He}\rho(t-L\tau)e^{-i\tau\He}\right)\right\|_*\\
&=\left\|\E_U\left(U^\dagger e^{-i\tau H} U\rho(t-L\tau)U^\dagger e^{i\tau H} U-e^{-i\tau\He}\rho(t-L\tau)e^{-i\tau\He}\right)\right\|_*,
\end{aligned}
\end{equation}
where$\|\cdot\|_*$ denotes the trace norm (nuclear norm). The fourth line follows from the property of trace norm and the fact that $Q^{(L)}$ is unitary. From the Taylor expansion, one can obtain 
\begin{equation}\label{eq:qdrfttaylor}
\begin{aligned}
&\E\left(U^\dagger e^{-i\tau H} U\rho(t-L\tau)U^\dagger e^{i\tau H} U\right)-\rho(t-L\tau) \\
&= \E\left(e^{-i\tau U^\dagger HU}\rho(t-L\tau) e^{i\tau U^\dagger HU} \right)-\rho(t-L\tau)\\
&= \E\left(-i\tau [U^\dagger H U, \rho(t-L\tau)]-\int_0^\tau e^{-is U^\dagger HU}[U^\dagger HU, [U^\dagger HU, \rho(t-L\tau)]] e^{is U^\dagger HU}(\tau-s) \right)\\
&=-i\tau [\E (U^\dagger H U), \rho(t-L\tau)]-\E\left(\int_0^\tau e^{-is U^\dagger HU}[U^\dagger HU, [U^\dagger HU, \rho(t-L\tau)]] e^{is U^\dagger HU}(\tau-s) \right)\\
&= -i\tau [\He, \rho(t-L\tau)]-\E\left(\int_0^\tau e^{-is U^\dagger HU}[U^\dagger HU, [U^\dagger HU, \rho(t-L\tau)]] e^{is U^\dagger HU}(\tau-s) \right).
\end{aligned}
\end{equation}
Similarly, one has
\begin{equation}\label{eq:qefftaylor}
\begin{aligned}
&\E\left( e^{-i\tau \He} \rho(t-L\tau) e^{i\tau \He} \right)-\rho(t-L\tau) \\
&= -i\tau [\He, \rho(t-L\tau)]-\int_0^\tau e^{-is \He}[\He, [\He, \rho(t-L\tau)]] e^{is \He}(\tau-s) .
\end{aligned}
\end{equation}
Combining \eqref{eq:qdrfttaylor} and \eqref{eq:qefftaylor}, one obtains
\begin{equation}\label{eq:combinetaylor}
\begin{aligned}
&\left\|\E\left(U^\dagger e^{-i\tau H} U\rho(t-L\tau)U^\dagger e^{i\tau H} U-e^{-i\tau\He}\rho(t-L\tau)e^{-i\tau\He}\right)\right\|_*\\
&\le \left\|\E\left(\int_0^\tau e^{-is U^\dagger HU}[U^\dagger HU, [U^\dagger HU, \rho(t-L\tau)]] e^{is U^\dagger HU}(\tau-s) \right)\right\|_*\\
&+\left\|\int_0^\tau e^{-is \He}[\He, [\He, \rho(t-L\tau)]] e^{is \He}(\tau-s)\right\|_*\\
&\le\half\tau^2\bigg(\sup_U \left\|[U^\dagger HU, [U^\dagger HU, \rho(t-L\tau)]] \right\|_*+\left\|[\He, [\He, \rho(t-L\tau)]]\right\|_*\bigg)
\end{aligned}
\end{equation}
One only needs to bound $\left\|[U^\dagger HU, [U^\dagger HU, \rho(t-L\tau)]] \right\|_*$ and $\left\|[\He, [\He, \rho(t-L\tau)]]\right\|_*$. By a direct calculation, one sees that
\begin{equation}\label{eq:Heres}
\begin{aligned}
&\left\|[\He, [\He, \rho(t-L\tau)]]\right\|_*\\
&\le \|\He^2\rho(t-L\tau)\|_* + 2\|\He\rho(t-L\tau)\He\|_*+\|\rho(t-L\tau)\He^2\|_*\\
&=2\|\phi_{t-L\tau}\|\|\He^2\phi_{t-L\tau}\|+2\|\He\phi_{t-L\tau}\|^2\\
&\le C N^2 \max\{|\xi_{jklm}|, |\omega_i|, |h_{i,j}|\}^2,\\
\end{aligned}
\end{equation}
where $C=\mathcal{O}(1)$ is a constant, and we have used the property of the trace norm for rank-$1$ matrices. In the last step, we are using \Cref{lem:norm Hphi} with $H=\He$ and $\phit=\phi_{t-L\tau}$. Similarly, one can obtain
\begin{equation}\label{eq:Hres}
\begin{aligned}
&\left\|[U^\dagger HU, [U^\dagger HU, \rho(t-L\tau)]]\right\|_*\\
&=2\|\phi_{t-L\tau}\|\|U^\dagger H^2U\phi_{t-L\tau}\|+2\|U^\dagger HU\phi_{t-L\tau}\|^2\\
&=2\|H^2U\phi_{t-L\tau}\|+2\|HU\phi_{t-L\tau}\|^2\\
&\le C N^2 \max\{|\xi_{jklm}|, |\omega_i|, |h_{i,j}|\}^2.\\ 
\end{aligned}
\end{equation}
In the last step, we are using \Cref{lem:norm Hphi} with $H=H$ and $\phit=U\phi_{t-L\tau}$. As a result, we have proved the following:

\begin{thm}
    For a Hamiltonian of the form described in \eqref{eq:generalH} and a product of coherent states $\ket{\phi_0} = \otimes_i \ket{\alpha_i}$ such that $\alpha_i$ are $\mathcal{O}(1)$ constants, we have 
    \begin{equation}\label{eq:qdriferror}
    \begin{aligned}
    &\Bigg\|\E_{U_j\sim\mathcal{D}} \left(\prod_{1\le j\le r}^{\leftarrow}U_j^\dagger e^{-i\tau H} U_j\right)\rho(0)\left(\prod_{1\le j\le r}^{\rightarrow}U_j^\dagger e^{i\tau H} U_j\right) - e^{-it H_{\mathrm{effective}}}\rho(0) e^{it H_{\mathrm{effective}}}\Bigg\|_*\\
    &\le C N^2 \frac{t^2}{r} \max\{|\xi_{jklm}|, |\omega_i|, |h_{i,j}|\}^2,
    \end{aligned}
\end{equation}
where $\rho(0) = \ket{\phi_0}\bra{\phi_0}$, $H_{\mathrm{effective}} = \mathbb{E}_{U\sim \mathcal{D}} U^{\dagger}HU$, $C$ is a $\mathcal{O}(1)$ constant, $N=|\mathcal{V}|+|\mathcal{E}|$ and $\mathcal{G}=(\mathcal{V},\mathcal{E})$ is the underlying graph of bosonic modes.
\end{thm}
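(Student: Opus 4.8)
The plan is to interpolate between the desired effective dynamics and the actual random-unitary dynamics through the telescoping sequence of hybrid density operators $\rho^{(\ell)}(t)$ defined in \eqref{eq:telescope}, which apply $\ell$ genuine random-unitary steps on the outside while evolving the remaining time $t-\ell\tau$ exactly under $\He$. By construction $\rho^{(0)}(t)=\rho(t)$ is the target state and $\rho^{(r)}(t)$ is the actual final state \eqref{eq:qdriftres}, so the triangle inequality for the trace norm gives
\begin{equation*}
  \norm{\rho^{(r)}(t)-\rho(t)}_* \le \sum_{L=1}^{r}\norm{\rho^{(L)}(t)-\rho^{(L-1)}(t)}_*,
\end{equation*}
and it suffices to bound each local error by $O(\tau^2 N^2\max\{\ldots\}^2)$; since $r=t/\tau$ and $r\tau^2=t^2/r$, summing the $r$ terms yields the claimed $O\!\big((t^2/r)N^2\max\{\ldots\}^2\big)$ bound.

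For a single local error I would first strip off the common outer factor $Q^{(L)}$, which is unitary, so that trace-norm invariance together with the independence of the $U_j$ reduces the problem to comparing a single random step $\E_U U^\dagger e^{-i\tau H}U\,\sigma\,U^\dagger e^{i\tau H}U$ against one effective step $e^{-i\tau\He}\sigma e^{i\tau\He}$, with $\sigma=\rho(t-L\tau)$; this is exactly the chain of equalities in \eqref{eq:localerror}. Expanding both channels to second order with the integral form of Taylor's theorem, as in \eqref{eq:qdrfttaylor} and \eqref{eq:qefftaylor}, the zeroth-order terms agree and the first-order terms cancel precisely because $\E_U U^\dagger HU=\He$. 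What survives are two second-order remainders built from the double commutators $[U^\dagger HU,[U^\dagger HU,\sigma]]$ and $[\He,[\He,\sigma]]$, each with a prefactor $\tfrac12\tau^2$ after integrating the weight $(\tau-s)$ over $[0,\tau]$, which is \eqref{eq:combinetaylor}.

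The crux is bounding these double commutators uniformly. Here I would use that $\sigma$ is the rank-one projector $\ket{\phi_{t-L\tau}}\bra{\phi_{t-L\tau}}$ and that $\ket{\phi_{t-L\tau}}=e^{-i(t-L\tau)\He}\ket{\phi_0}$ is still of the product-of-coherent-states form \eqref{eq:statephi}, since the $\mathrm{U}(1)^{\times N}$-symmetric $\He$ only imprints number-dependent phases. For rank-one $\sigma$ the trace norm of a double commutator collapses to a few operator-norm quantities $\norm{\He\phi}$, $\norm{\He^2\phi}$ (and their $U^\dagger HU$ analogues), which is where \Cref{lem:norm Hphi} enters: it supplies $\norm{H\phi}=O(N\max\{\ldots\})$ and $\norm{H^2\phi}=O(N^2\max\{\ldots\}^2)$. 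For the conjugated operator $U^\dagger HU$ I would apply the same lemma to the state $U\ket{\phi_{t-L\tau}}$, noting that a linear-optics (Gaussian) $U$ maps a product of coherent states to another such product, so the lemma's hypotheses persist; this gives \eqref{eq:Heres} and \eqref{eq:Hres}.

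The main obstacle is the unboundedness of $H$: the naive expansion \eqref{eq:first_order_approx_short_time_step_appendix} leaves $O(\tau^2)$ remainders that are \emph{unbounded} operators, so one cannot simply invoke $\norm{H}$. The technically delicate resolution is to keep each remainder in the form of a double commutator acting on a \emph{specific} coherent-type state and to control it through $\norm{H\phi}$ and $\norm{H^2\phi}$ rather than $\norm{H}$ — legitimate only because the effective dynamics and the Gaussian unitaries both preserve the coherent-state structure, which is exactly what keeps \Cref{lem:norm Hphi} applicable at every step of the telescope.
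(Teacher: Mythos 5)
Your proposal is correct and follows essentially the same route as the paper's own proof: the same telescoping sequence $\rho^{(\ell)}(t)$, removal of the outer unitaries $Q^{(L)}$ via trace-norm invariance and independence, second-order Taylor expansions with integral remainder whose first-order terms cancel because $\E_U U^\dagger H U = \He$, and double-commutator bounds obtained from the rank-one structure of $\rho(t-L\tau)$ together with Lemma~\ref{lem:norm Hphi}, summed over the $r$ steps. The only slight imprecision is your justification for applying the lemma to $U\phi_{t-L\tau}$ — namely that Gaussian unitaries preserve products of coherent states — since $\phi_{t-L\tau}$ is not a product of coherent states but of the more general form \eqref{eq:statephi}; what actually matters (and what the paper implicitly relies on) is that the distributions $\mathcal{D}$ used in the protocol consist, in the appropriate mode basis, of phase shifters, which do preserve that form.
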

\begin{proof}
The left-hand side of \eqref{eq:qdriferror} can be expressed by $\|\rho^{(r)}(t)-\rho^{(0)}(t)\|_*$, where $\rho^{(r)}(t)$ and $\rho^{(0)}(t)$ are defined in \eqref{eq:telescope}. Thus
\begin{equation}
\begin{aligned}
&\|\rho^{(r)}(t)-\rho^{(0)}(t)\|_*\le\sum_{L=1}^r\|\rho^{(L)}(t)-\rho^{(L-1)}(t)\|_*\\
&\le \sum_{L=1}^r C N^2 \tau^2 \max\{|\xi_{jklm}|, |\omega_i|, |h_{i,j}|\}^2 = C N^2 \frac{t^2}{r} \max\{|\xi_{jklm}|, |\omega_i|, |h_{i,j}|\}^2,
\end{aligned}
\end{equation}
where we have used \eqref{eq:combinetaylor}, \eqref{eq:Heres} and \eqref{eq:Hres} in the second inequality.
\end{proof}

\end{document}